\def\mytextindent#1{\indent\indent\llap{\rm#1\enspace}\ignorespaces}
\def\myitem{\par\hangindent30pt\mytextindent}
\def\myclaim#1#2{
   \global\advance\claimno by 1\relax
   \bigskip\noindent\rlap{\rm(\the\claimno)}\ignorespaces
   \global\expandafter\edef\csname CLAIMLABEL#1\endcsname{(\the\claimno)}\relax
   \hangindent=33pt\hskip30pt{\sl#2}\bigskip}
\def\refclaim#1{\csname CLAIMLABEL#1\endcsname}
\def\endclaim{\medskip}
\def\junk#1{}
\let\ppar=\par
\def\rt#1{#1}
\def\mylabel#1{{\label{#1}}}
\newtheorem{theorem}{Theorem}
\begin{document}
\title{Three-coloring triangle-free graphs on surfaces I. 
       Extending a coloring to a disk with one triangle}
\author{%
     Zden\v{e}k Dvo\v{r}\'ak\thanks{Computer Science Institute,
           Charles University, 
           Malostransk{\'e} n{\'a}m{\v e}st{\'\i} 25, 118 00 Prague, 
           Czech Republic. E-mail: {\tt rakdver@iuuk.mff.cuni.cz}.
           Supported by the Center of Excellence -- Inst. for Theor. Comp. Sci., Prague (project P202/12/G061 of Czech Science Foundation).}
 \and
     Daniel Kr{\'a}l'\thanks{Mathematics Institute, DIMAP and Department of Computer Science, University of Warwick, Coventry CV4 7AL. E-mail: {\tt D.Kral@warwick.ac.uk}.}
 \and
        Robin Thomas\thanks{School of Mathematics, 
        Georgia Institute of Technology, Atlanta, GA 30332. 
        E-mail: {\tt thomas@math.gatech.edu}.
        Partially supported by NSF Grant No.~DMS-0701077.}
}
\date{March 4, 2016}
\maketitle
\begin{abstract}
Let $G$ be a plane graph with exactly one triangle $T$ and all 
other cycles of length at least $5$, and let $C$ be a facial cycle of $G$
of length at most six.
We prove that a $3$-coloring of $C$ does not extend to a $3$-coloring of $G$ 
if and only if $C$ has length exactly six and there is a color $x$ such that
either $G$ has an edge joining two vertices of $C$ colored $x$,
or $T$ is disjoint from $C$ and every vertex of $T$ is adjacent to a vertex
of $C$ colored $x$.
This is a lemma to be used in a future paper of this series.
\end{abstract}

\section{Introduction}

This is the first paper in a series aimed at studying the $3$-colorability
of graphs on a fixed surface that are either triangle-free, or have their
triangles restricted in some way.
All graphs in this paper are simple, with no loops or parallel edges.

The subject of coloring graphs on surfaces goes back to 1890 and the work
of Heawood~\cite{Heawood}, who proved that if $\Sigma$ is not the
sphere, then every graph in $\Sigma$
is $t$-colorable as long as
$t\ge H(\Sigma):=\lfloor(7+\sqrt{24\gamma+1})/2\rfloor$.
Here and later $\gamma$ is the {\em Euler genus of $\Sigma$},
defined as $\gamma=2g$ when $\Sigma=S_g$, the orientable surface
of genus $g$, and $\gamma=k$ when $\Sigma=N_k$, the non-orientable surface
with $k$ cross-caps.
Incidentally, the assertion holds for the sphere as well, by the
Four-Color Theorem~\cite{AppHak1,AppHakKoc,AppHak89,RobSanSeyTho4CT}.
Ringel and Youngs (see~\cite{Ringel}) proved that the bound is
best possible for all surfaces except the Klein bottle, for which
the correct bound is $6$.
Dirac~\cite{Dirmap} and Albertson and Hutchinson~\cite{AlbHut}
improved Heawood's result by showing that
every graph in $\Sigma$ is actually $(H(\Sigma)-1)$-colorable,
unless it has a subgraph isomorphic to the complete graph
on $H(\Sigma)$ vertices.

For triangle-free graphs there does not seem to be a similarly nice formula,
but Gimbel and Thomassen~\cite{gimbel} gave very good bounds: they
proved that the maximum chromatic number of a triangle-free graph drawn
in a surface of Euler genus $\gamma$ is at least $c_1 \gamma^{1/3}/\log \gamma$
and at most $c_2(\gamma/\log \gamma)^{1/3}$ for some absolute constants 
$c_1$ and $c_2$.

In this series we adopt a more modern approach to coloring graphs on surfaces,
following the seminal work of 
Thomassen~\cite{Tho5torus,ThoCritical,thom-surf}.
The basic premise is that while Heawood's formula is best possible
for all surfaces except the Klein bottle, only relatively few graphs attain
the bound or even come close.
To make this assertion more precise let us recall that a graph $G$ is
called {\em $k$-critical}, where $k\ge1$ is an integer, if every proper
subgraph of $G$ is $(k-1)$-colorable, but $G$ itself is not.
It follows easily from Euler's formula that if $\Sigma$ is a fixed surface,
and $G$ is a sufficiently big graph drawn in $\Sigma$, then $G$ has 
a vertex of degree at most six. 
It follows that for every $k\ge8$ the graph $G$ is not $k$-critical,
and hence there
are only finitely many $k$-critical graphs that can be drawn in $\Sigma$.
It is not too hard to extend this result to $k=7$.
In fact, it can be extended to $k=6$ by the following deep theorem
of Thomassen~\cite{ThoCritical}.

\begin{theorem}
\mylabel{thm:thomcrit}
For every surface $\Sigma$ there are only finitely many $6$-critical 
graphs that can be drawn in $\Sigma$.
\end{theorem}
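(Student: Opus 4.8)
\medskip\noindent
The plan is to prove the equivalent statement that there is a constant $N=N(\Sigma)$ such that every $6$-critical graph drawn in $\Sigma$ has at most $N$ vertices, arguing by induction on the Euler genus $\gamma$ of $\Sigma$. If $\gamma=0$ then $\Sigma$ is the sphere, and by the Four-Color Theorem every graph drawn in $\Sigma$ is $5$-colorable, so no $6$-critical graph exists: this is the base case. For the inductive step, let $G$ be a $6$-critical graph drawn in $\Sigma$. Since $G$ is $6$-critical it is connected, and I would first reduce to the case that the drawing is a $2$-cell embedding: otherwise there is a noncontractible simple closed curve inside a face of $G$, and cutting $\Sigma$ along it and capping the new boundary circles with disks yields a $2$-cell embedding of $G$ in a surface (possibly disconnected) each component of which has Euler genus less than $\gamma$, so the induction hypothesis applies to the component containing $G$. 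Also, if $G$ had a vertex $v$ of degree at most $4$, a $5$-coloring of $G-v$ (which exists since $G$ is $6$-critical) would leave at most four colors forbidden at $v$ and hence extend to $G$; so $G$ has minimum degree at least $5$.

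The heart of the argument is a precoloring-extension theorem for plane graphs of minimum internal degree five, playing the role that the lemma of the present paper plays in the triangle-free setting. Precisely, I would prove: for every $w$ there is $M=M(w)$ such that whenever $H$ is a plane graph whose outer face is bounded by a cycle $C$ with $|C|\le w$ and every vertex of $H$ not on $C$ has degree at least $5$ in $H$, a $5$-coloring of $C$ that fails to extend to a $5$-coloring of $H$ can be blamed on one of a bounded list of local obstructions situated within distance $M$ of $C$; in particular a precoloring of $C$ extends once $H$ has more than $M$ vertices and carries no such obstruction. I would prove this by induction on $|V(H)|$, peeling vertices off near the boundary and using Thomassen-style discharging on $H$ (assigning charge $\deg(x)-6$ to each internal vertex $x$, $2\ell(f)-6$ to each internal face $f$, together with appropriate boundary charges, the total being determined by $w$) to locate a reducible configuration. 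The classification of the obstructions, together with the Kempe-chain recoloring arguments that show a $5$-precoloring extends past each non-obstruction, is the most laborious part.

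Granting this, the theorem follows. Suppose $G$ is $6$-critical and $2$-cell embedded in $\Sigma$ with $|V(G)|$ enormous. Using that $\gamma$ is fixed, I would extract a large \emph{planar piece}: a closed disk $\Delta\subseteq\Sigma$ whose boundary meets $G$ in a cycle $C$ of length bounded in terms of $\gamma$, such that the subgraph $H$ of $G$ drawn in $\Delta$ satisfies $|V(H)|>M(|C|)$ and every vertex of $H$ off $C$ has all its $G$-neighbors inside $H$, so its degree in $H$ equals its degree in $G$ and is therefore at least $5$. Such a $\Delta$ exists because a large graph on a fixed surface either has small edge-width, in which case cutting along $O(\gamma)$ short noncontractible cycles leaves a planar remnant one of whose faces is a disk containing most of the graph, or has large edge-width and is thus locally planar around any vertex. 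Since $C$ is short and $H$ is large, the interior of $H$ is nonempty, so $G':=G-(V(H)\setminus V(C))$ is a proper subgraph of $G$ and hence $5$-colorable; fix a $5$-coloring $\psi$ of $G'$. As $G$ itself is not $5$-colorable, $\psi$ does not extend to $V(H)$, i.e.\ the precoloring $\psi|_C$ of $C$ does not extend to a $5$-coloring of $H$. By the extension theorem, $H$ contains a local obstruction near $C$; excising a bounded neighborhood of it and re-routing the boundary produces, on $\Sigma$, either a $5$-coloring of $G$ (contradiction) or a $6$-critical graph with fewer vertices. Iterating, $|V(G)|$ is bounded by a function of $\gamma$, which closes the induction.

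The main obstacle will be the extension theorem of the second paragraph: one must pin down the finite list of obstructions and, more delicately, show that each is genuinely \emph{reducible}, i.e.\ that replacing it by a small gadget preserves $6$-criticality (or already yields a $5$-coloring), while controlling how the obstructions interact with each other and with the handles of $\Sigma$. Secondary difficulties are extracting the planar piece with an honest \emph{cycle} rather than a short closed walk as its boundary, which needs the edge-width dichotomy plus some surgery, and keeping the final bound $N(\Sigma)$ effective, which requires every reduction to strictly decrease a fixed potential.
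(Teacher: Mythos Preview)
The paper does not prove this theorem: it is quoted as a deep result of Thomassen~\cite{ThoCritical} and no argument is given here. So there is nothing in the paper to compare your proposal against.

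As for your outline on its own merits, the architecture (induct on genus, extract a large planar disk, extend a $5$-coloring across it) is reasonable and broadly in the spirit of how such results are proved, but you have deferred the entire content of the theorem into the ``precoloring-extension theorem'' of your second paragraph and then not proved it. Saying that you would use discharging to locate a reducible configuration and that the classification of obstructions ``is the most laborious part'' is not a proof sketch; it is a restatement of the difficulty. For $5$-coloring planar graphs with minimum internal degree five there is no off-the-shelf finite list of reducible configurations, and Kempe-chain arguments are notoriously unreliable in exactly this regime. Thomassen's actual proof in~\cite{ThoCritical} does not proceed via discharging-plus-reducibility; it goes through list coloring (building on his $5$-choosability theorem) and a delicate structural analysis of how the degree-$5$ vertices can be arranged inside a $6$-critical graph. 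So the step you correctly identify as ``the main obstacle'' really is the whole theorem, and nothing you have written addresses it. A secondary gap: large edge-width does not by itself hand you a disk with short cyclic boundary containing most of the graph; you need an additional ingredient (for instance, that Euler's formula forces many degree-$5$ vertices, which must then cluster in a locally planar region) to manufacture such a disk.
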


\noindent
The lists of $6$-critical graphs are explicitly known for the
projective plane~\cite{AlbHut}, the torus~\cite{Tho5torus} and the 
Klein bottle~\cite{ChePosStrThoYer,KawKraKynLid}.
An immediate consequence is that for every surface $\Sigma$ there is
a polynomial-time (in fact, linear-time) algorithm to test whether
an input graph drawn in $\Sigma$ is $5$-colorable.
Theorem~\ref{thm:thomcrit} does not hold for $5$-critical graphs,
because of an elegant construction of Fisk~\cite{Fis}.
For $3$-colorability an algorithm as above does not exist, unless $P=NP$,
because testing $3$-colorability is NP-hard even for planar 
graphs~\cite{GarJoh}.
It is an open problem whether there is a polynomial-time algorithm for testing
$4$-colorability of graphs in $\Sigma$ when $\Sigma$ is a fixed
surface other than \rt{the} sphere.
The techniques currently available do not give much hope for a positive
resolution in the near future. 

How about triangle-free graphs?
Similarly as above, if $G$ is a sufficiently large triangle-free graph
in a fixed surface $\Sigma$, then $G$ has a vertex of degree at most four.
Thus $G$ is not $6$-critical, and the argument can be strengthened to show
that $G$ is not $5$-critical.
Thus for a fixed integer $k\ge4$ testing $k$-colorability of triangle-free graphs
drawn in a fixed surface can be done in linear time, as before.
That brings us to testing $3$-colorability of triangle-free graphs
in a fixed surface, the subject of this series of papers.
The question has been raised by
Gimbel and Thomassen~\cite{gimbel} and we resolve it later
in this series, after we develop some necessary theory.

Historically the first result in this direction is the following
classical theorem of Gr\"otzsch~\cite{grotzsch}.

\begin{theorem}
\label{grotzsch}
Every triangle-free planar graph is $3$-colorable.
\end{theorem}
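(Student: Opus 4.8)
The plan is to deduce Gr\"otzsch's theorem from a stronger statement that can be proved by induction, in the spirit of Thomassen: namely, that \emph{every proper $3$-coloring of a cycle $C$ of length at most $5$ bounding the outer face of a triangle-free plane graph $G$ extends to a proper $3$-coloring of $G$} (call this $(\star)$). Given $(\star)$, the theorem follows: a smallest counterexample $G$ is $2$-connected with minimum degree at least $3$ (a cut vertex lets one combine colorings of the two sides after permuting colors, and a vertex of degree at most $2$ is deleted and its color reinstated), so every face of $G$ is bounded by a cycle of length at least $4$, and Euler's formula then forces some face to have length at most $5$; redrawing $G$ so that this is the outer face, its bounding cycle $C$ satisfies $|C|\le 5$ and hence has a proper $3$-coloring, which extends to $G$ by $(\star)$ --- a contradiction. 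So it suffices to prove $(\star)$.

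For $(\star)$ I would induct on $|V(G)|$, taking a minimal counterexample $G$, $C$, $\varphi$, where $\varphi$ is a precoloring of $C$ that does not extend, and I would first carry out the usual structural reductions. If $G$ is disconnected or has a cut vertex, split along it and color the pieces by induction (each is again bounded by a cycle of length at most $5$) and paste the colorings together. Because $G$ is triangle-free and $|C|\le 5$, the cycle $C$ has no chord; more generally, any separating cycle $D$ of $G$ with $|D|\le 5$ has no chord, so one can color the part of $G$ drawn outside $D$ by induction (it still has outer cycle $C$), observe that this colors $D$, and then extend this coloring of $D$ across the part of $G$ drawn inside $D$ by induction, this time with $D$ as the outer cycle. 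Thus afterwards $G$ is $2$-connected and has no separating cycle of length at most $5$.

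The engine is then Euler's formula: since $G$ is $2$-connected and triangle-free, $\sum_v (d(v)-4)+\sum_f (\ell(f)-4)=-8$, where $\ell(f)$ is the length of the face $f$. Inner faces contribute nonnegatively and the outer face contributes at most $1$, so after the minimum-degree reduction there must be several vertices of degree $3$, and a discharging argument then locates a \emph{reducible configuration}, the basic ones being (a) a vertex of degree at most $2$, (b) a vertex $v$ of degree $3$ not too close to $C$, and (c) a $4$-face incident with $C$ or with another $4$-face. Case (a) is trivial. In (b) the three neighbors $v_1,v_2,v_3$ of $v$ are pairwise nonadjacent (otherwise they form a triangle with $v$), and the plan is to delete $v$, identify two of them into a single vertex, apply induction to the smaller triangle-free plane graph so obtained, and finally color $v$ with a color missing from $\{v_1,v_2,v_3\}$ --- which exists because the two identified neighbors receive the same color; case (c) is handled similarly by suppressing the $4$-face and identifying a pair of opposite vertices.

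The hard part will be showing that these identifications are legitimate: identifying $v_1$ with $v_2$ can create a triangle (this happens precisely when $G$ has a $5$-cycle through $v$, $v_1$, $v_2$, which by the separating-cycle reduction must then bound a face), can introduce a new short separating cycle, or can break the outer cycle when $v$ or its neighbors lie on $C$. One has to verify that for at least one choice of the pair no such obstruction arises, and otherwise handle the few exceptional local configurations directly; keeping control near $C$ --- of degree-$3$ vertices on $C$ and of clusters of $4$-faces meeting $C$ --- is exactly why $(\star)$ is set up with a short outer cycle, and is where essentially all of the casework lies. (An alternative that sidesteps part of this, due to Thomassen, is to prove a list-coloring strengthening of $(\star)$ for plane graphs of girth at least $5$ and then reduce the girth-$4$ case to it by eliminating $4$-cycles one at a time, the $4$-cycle elimination playing the role of obstacle (c).)
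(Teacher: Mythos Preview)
The paper does not prove Gr\"otzsch's theorem: it is stated as a classical result, cited to Gr\"otzsch~\cite{grotzsch}, with pointers to later proofs by Thomassen~\cite{thom-torus,Tho3list,ThoShortlist} and Dvo\v{r}\'ak--Kawarabayashi--Thomas~\cite{DvoKawTho}. Your extension statement~$(\star)$ is exactly the paper's Theorem~\ref{grotzsch5cycle}, which the paper likewise quotes without proof and then uses as a tool. So there is no ``paper's own proof'' to compare against; the paper simply imports the result.

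That said, your plan is the right one and matches the strategy of the proofs the paper cites: strengthen to the precolored short outer cycle, induct, reduce along short separating cycles, and use discharging to locate a degree-$3$ vertex or a $4$-face where an identification of two nonadjacent vertices yields a smaller triangle-free instance. Your deduction of Gr\"otzsch from~$(\star)$ is correct (min degree~$\ge3$ and $2$-connectivity in a minimal counterexample, then Euler forces a face of length $4$ or $5$). The one place to be careful in your reductions for~$(\star)$ is the claim that after removing separating $(\le5)$-cycles every inner face is bounded by a cycle: you still need $2$-connectivity of the minimal counterexample to~$(\star)$, which requires a short argument since a cut vertex could a priori lie on~$C$. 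As you acknowledge, the substantive work---showing some choice of identification creates no triangle, no parallel edge, and no new short cycle through~$C$---is not carried out here, so what you have is a correct outline rather than a proof; the references above fill in exactly that casework.
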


Thomassen \cite{thom-torus,Tho3list,ThoShortlist}
found three reasonably simple proofs, and extended Theorem~\ref{grotzsch}
to other surfaces.  
Recently, two of us, in joint work with Kawarabayashi~\cite{DvoKawTho}
were able to design a linear-time algorithm to $3$-color triangle-free
planar graphs, and as a by-product found perhaps a yet simpler proof
of Theorem~\ref{grotzsch}.
The statement of Theorem~\ref{grotzsch}
cannot be extended to any surface other than the sphere.
In fact, for every non-planar surface $\Sigma$ there are infinitely many
$4$-critical triangle-free graphs that can be drawn in $\Sigma$.
For instance, the graphs obtained from an odd cycle of length five or more
by applying Mycielski's
construction \cite[Section~8.5]{BonMur} have that property.
Thus an algorithm for testing $3$-colorability of triangle-free graphs
on a fixed surface will have to involve more than just testing the
presence of finitely many obstructions.

The situation is different for graphs of girth at least five
by another deep theorem of Thomassen~\cite{thom-surf}.

\begin{theorem}
\mylabel{thm:thomgirth5}
For every surface $\Sigma$ there are only finitely many $4$-critical
graphs of girth at least five that can be drawn in $\Sigma$.
\end{theorem}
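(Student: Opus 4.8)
The natural route --- essentially the one Thomassen follows --- is to prove a stronger statement involving a precolored facial cycle and then induct on the Euler genus $\gamma$ of $\Sigma$. Concretely, I would aim to establish a constant $\beta$ such that whenever $G$ has girth at least five, is embedded in $\Sigma$ with a facial cycle $C$, and is \emph{$C$-critical} --- in the sense that some $3$-coloring of $C$ fails to extend to $G$ but extends to every proper subgraph of $G$ containing $C$ --- one has $|V(G)| \le \beta(|V(C)| + \gamma + 1)$. Theorem~\ref{thm:thomgirth5} is then the special case in which there is no precolored cycle: a putative large $4$-critical graph of girth at least five would, by the argument below, contain a short non-contractible cycle, and cutting along it both lowers $\gamma$ and produces a precolored boundary of bounded length.

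The heart of the matter is the planar base case $\gamma = 0$: a plane graph $G$ of girth at least five with outer cycle $C$ that is $C$-critical satisfies $|V(G)| \le \beta|V(C)|$. I would prove this by induction on $|V(G)|$. If $G$ has a chord of $C$ or a short cycle separating $C$ from some part of $G$, split $G$ along it and apply induction to the pieces. Otherwise $G$ is suitably well-connected, and a discharging argument --- assigning charge $2\ell(f) - 6$ to each internal face $f$ and $2\deg(v) - 6$ to each internal vertex $v$, the total being controlled by Euler's formula and $|V(C)|$ --- forces $G$ to contain one of a short list of reducible configurations, such as a vertex of small degree at controlled distance from $C$, or a short internal face all of whose vertices have degree three. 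Each such configuration is then removed: for every precoloring of $C$ one colors a smaller subgraph by induction and extends across the configuration, using that girth five leaves enough slack. This is precisely the flavor of the analysis carried out in the present paper for the one-triangle disk, and I expect the exact list of reducible configurations, together with verifying that every precoloring really does extend past each of them, to be the principal obstacle.

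For the inductive step, let $G$ be $C$-critical of girth at least five in $\Sigma$ with $\gamma \ge 1$. If $G$ has a non-contractible cycle, or a cycle not bounding a disk that contains $C$, of length at most a suitable constant $w$, cut $\Sigma$ along a shortest such cycle $D$: the Euler genus drops, $D$ is replaced by one or two facial cycles of total length at most $2|V(D)| \le 2w$, and --- since no extension of the bad precoloring of $C$ exists --- the cut graph remains critical with respect to $C$ together with the new facial cycles, so induction bounds $|V(G)|$. If $G$ has no such short cycle, then it has large edge-width away from $C$, so a disk neighborhood of any vertex far from $C$ is a plane graph of girth at least five, and the base case supplies a reducible configuration inside it, contradicting $C$-criticality. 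Either way $|V(G)|$ is bounded.

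The two places where care is needed: the cutting operation must duplicate $D$ with matching precolorings on its two copies, and one must check that no $3$-coloring of the cut graph projects back to $G$, so that criticality is genuinely preserved; and the constants must be chosen so the recursion closes --- the genuinely \emph{linear} (not merely finite) bound in the planar case, and the bound $2w$ on the length of the new boundary, have to be compatible with $\beta$, which is exactly what a careless discharging scheme would get wrong.
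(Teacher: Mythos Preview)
The paper does not prove Theorem~\ref{thm:thomgirth5} at all: it is stated in the introduction as a ``deep theorem of Thomassen~\cite{thom-surf}'' and used only as context. The paper does announce that its forthcoming Theorem~\ref{thm:corner} will imply and sharpen Theorem~\ref{thm:thomgirth5}, but Theorem~\ref{thm:corner} is not proved here either; the present paper is devoted entirely to the disk lemma, Theorem~\ref{thm-6plus3}. So there is no ``paper's own proof'' to compare against.

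That said, your outline is a fair high-level summary of Thomassen's original argument in~\cite{thom-surf}, and you say as much. A few remarks on the sketch itself. First, the planar base case is the real work, and your description of it is too optimistic: Thomassen's list of reducible configurations for the girth-five disk is long, and the reductions are not simple color-extension arguments but involve identifications and auxiliary list-coloring statements; ``I expect \ldots\ to be the principal obstacle'' is accurate, but the sketch gives no indication of how to overcome it. Second, in the genus-reduction step, the dichotomy ``short non-contractible cycle or large edge-width'' needs the representativity to be large relative to the precolored boundary $C$ as well, since a short cycle homotopic to $C$ but not bounding a disk on the $C$-side must also be handled; you gesture at this with ``or a cycle not bounding a disk that contains $C$'', but the cutting then has to keep track of multiple boundary components and the criticality bookkeeping becomes more delicate than a single duplicated $D$. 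Third, the claim that in the large-edge-width case a disk neighborhood yields a reducible configuration ``contradicting $C$-criticality'' presumes that the planar reduction is local and oblivious to the global precoloring, which is exactly what the planar analysis has to establish and is not automatic.

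None of these are fatal to the strategy --- they are the places where Thomassen's paper spends its effort --- but as written the proposal is an outline of where the difficulties lie rather than a proof.
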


Thus the $3$-colorability problem on a fixed surface
has a polynomial-time algorithm
for graphs of girth at least five, but the presence of cycles of
length four complicates matters.
Let us remark that there are no $4$-critical graphs of girth at least five
on the projective plane and the torus~\cite{thom-torus} and
on the Klein bottle~\cite{thomwalls}.

The only non-planar surface for which the $3$-colorability problem
for triangle-free graphs is fully characterized is the projective plane.
Building on earlier work of Youngs~\cite{Youngs}, Gimbel and
Thomassen~\cite{gimbel} obtained the following elegant characterization.
A graph drawn in a surface is a {\em quadrangulation} if every face
is bounded by a cycle of length four.

\begin{theorem}
\mylabel{thm:gimtho}
A triangle-free graph drawn in the projective plane is $3$-colorable if and only
if it has no subgraph isomorphic to a non-bipartite
quadrangulation of the projective plane.
\end{theorem}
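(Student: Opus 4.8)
The plan is to prove the two implications separately. The direction ``$3$-colourable $\Rightarrow$ no non-bipartite quadrangulation subgraph'' is equivalent to the assertion that a non-bipartite quadrangulation of the projective plane is not $3$-colourable, which is Youngs' theorem~\cite{Youngs}; the reverse direction is the substantial half.

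For the easy direction it suffices to show that a quadrangulation $Q$ of the projective plane admitting a proper $3$-colouring $c\colon V(Q)\to\mathbb{Z}_3$ is bipartite. Along each edge the two colours differ by $+1$ or $-1$ in $\mathbb{Z}_3$; lift these increments to $\pm 1\in\mathbb{Z}$ to obtain a $\mathbb{Z}$-valued $1$-cochain $w$ on the cell complex $Q\hookrightarrow N_1$. Around a $4$-face the four increments sum to an integer in $\{-4,-2,0,2,4\}$ that is divisible by $3$, hence to $0$, so $w$ is a cocycle; and since twice a non-contractible cycle is null-homologous while contractible cycles are directly so, $w$ sums to $0$ along every cycle, so $w=\delta g$ for some $g\colon V(Q)\to\mathbb{Z}$. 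As $g$ changes by $\pm 1$ along every edge, $g\bmod 2$ is a proper $2$-colouring, so $Q$ is bipartite.

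For the hard direction I would argue by contradiction. Let $G$ be a triangle-free graph drawn in the projective plane that is not $3$-colourable, has no subgraph isomorphic to a non-bipartite quadrangulation of the projective plane, and is edge-minimal with these properties; then $G$ is $4$-critical, hence $2$-connected with $\delta(G)\ge 3$, and by minimality every triangle-free graph on the projective plane with fewer edges and no such subgraph --- and likewise every triangle-free plane graph with a suitably precoloured facial cycle --- may be assumed $3$-colourable. By Gr\"otzsch's theorem (Theorem~\ref{grotzsch}), $G$ is non-planar, so it has a non-contractible cycle, and a shortest one is induced and of length at least $4$. The goal is to prove that $G$ is itself a quadrangulation: once this is known $G$ is not $2$-colourable and hence non-bipartite, so $G$ is a non-bipartite quadrangulation of the projective plane, contradicting the choice of $G$. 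Writing Euler's formula for $N_1$ as $\sum_{v}(4-\deg v)+\sum_{f}(4-|f|)=4$ and using triangle-freeness and $\delta(G)\ge 3$, every vertex contributes at most $1$ and every face at most $0$, so the deficiency due to vertices of degree $\ge 4$ and faces of length $\ge 5$ is tightly bounded and $G$ has at least four vertices of degree exactly $3$. I would finish by combining this with two further ingredients: the Gallai description of the subgraph induced by the degree-$3$ vertices of a $4$-critical graph (in the triangle-free case its blocks are $K_1$, $K_2$, and odd cycles of length $\ge 5$), which constrains how these vertices can surround a hypothetical face of length $\ge 5$; and Thomassen-type precolouring-extension results for triangle-free discs, to the effect that a ``critical'' triangle-free plane graph with a short precoloured facial cycle is essentially a near-quadrangulation, all of whose internal faces are bounded by $4$-cycles. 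A face of length $\ge 5$, or a short contractible cycle bounding a disc with an interior vertex, is then a reducible configuration: colour the remainder of $G$ on the projective plane by the inductive hypothesis and extend the colouring across the disc, obtaining the forbidden $3$-colouring of $G$; hence $G$ has no such face and is a quadrangulation.

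The main obstacle is precisely this last reduction. Ruling out large faces requires the structure theory of critical triangle-free plane graphs with a precoloured boundary cycle, together with a delicate case analysis that keeps the genuine obstruction to extension away from the interface --- choosing the outer colouring, and if need be re-routing the cut, so that no like-coloured pair on the disc boundary is joined by an edge --- while remaining compatible with the positions of the degree-$3$ vertices forced by the Gallai structure. These are exactly the disc-extension questions developed in this series; the present paper treats a disc carrying a single triangle, whereas Theorem~\ref{thm:gimtho} needs the purely triangle-free extension statements, which are due to Thomassen.
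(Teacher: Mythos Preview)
The paper does not prove Theorem~\ref{thm:gimtho}; it is quoted in the introduction as a known result of Gimbel and Thomassen~\cite{gimbel}, building on Youngs~\cite{Youngs}, and serves only as motivation for the series. There is no proof here to compare your attempt against.

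Your argument for the easy direction is a correct and pleasant proof of Youngs' theorem via the $\mathbb{Z}$-valued cochain lifted from a proper $3$-colouring.

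For the hard direction your overall plan---take a minimal non-$3$-colourable triangle-free projective-planar graph $G$ with no non-bipartite quadrangulation subgraph and show that $G$ must itself be a quadrangulation---is the right one and is what Gimbel and Thomassen do. But the closing reduction has a genuine gap, which you yourself flag. Gallai's description of the low-degree subgraph of a $4$-critical graph is not the tool that eliminates faces of length $\ge 5$: it tells you where degree-$3$ vertices sit, but gives no mechanism for extending a $3$-colouring across a large face. The decisive ingredient in~\cite{gimbel} is the planar precolouring-extension statement recorded in this paper as Theorem~\ref{thm-gimbel} (together with Theorem~\ref{grotzsch5cycle} for shorter cycles): once one knows that a triangle-free $R$-critical plane graph with $|R|=6$ has every internal face of length four, one can cut $G$ along a suitable short cycle, colour one side by minimality, and deduce that any obstruction to extension forces the disc itself to be quadrangulated. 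Your proposal gestures toward ``Thomassen-type precolouring-extension results'' but never isolates this specific $6$-cycle theorem, and the Gallai route you emphasise would not close the gap on its own.
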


For other surfaces there does not seem to be a similarly nice characterization,
but in a later paper of this series we will present a polynomial-time
algorithm to decide whether a triangle-free graph in a fixed surface
is $3$-colorable.
The algorithms naturally breaks into two steps.
The first is when the graph is
a quadrangulation, except perhaps for a bounded number of larger faces
of bounded size, which will be allowed to be precolored.
In this case there is a simple topological obstruction to the existence
of a coloring extension based on the so-called ``winding number" of
the precoloring.
Conversely, if the obstruction is not present and the graph is highly
``locally planar", then we can show that the precoloring can be
extended to a $3$-coloring of the entire graph.
This can be exploited to design a polynomial-time algorithm.
With additional effort the algorithm can be made to run in linear time.

The second step covers the remaining case, when the graph has either many faces
of size at least five, or one large face, and the same holds for every
subgraph.
In that case we show that the graph is $3$-colorable.
That is a consequence of the following theorem, which will form the
cornerstone of this series.

\begin{theorem}
\mylabel{thm:corner}
There exists an absolute constant $K$ with the following property.
Let $G$ be a  graph drawn in a surface $\Sigma$ of Euler genus $\gamma$
with no non-contractible cycles of length at most four,  and let $t$ be
the number of triangles in $G$.
If $G$ is $4$-critical, 
then $\sum|f|\le K(t+\gamma)$,
where the summation is over all faces $f$ of $G$ of length at least five.
\end{theorem}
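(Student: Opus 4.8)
The plan is to run a discharging argument on the embedded graph $G$, where the combinatorial fuel is a supply of reducible configurations, each ruled out by a precoloring-extension lemma of the kind proved in this paper (and in its sequels). Being $4$-critical, $G$ has minimum degree at least three, and we may assume it is connected and $2$-cell embedded; the hypothesis that there are no separating cycles of length at most four keeps the pieces obtained from any such reduction well-behaved. Assign to each vertex $v$ the charge $d(v)-4$ and to each face $f$ the charge $|f|-4$. Since $\sum_v d(v) = \sum_f |f| = 2|E|$, Euler's formula $|V|-|E|+|F| = 2-\gamma$ gives
\[
  \sum_v (d(v)-4) + \sum_f (|f|-4) \;=\; 4|E| - 4|V| - 4|F| \;=\; -4(2-\gamma) \;=\; 4\gamma - 8 .
\]
A $4$-face and a vertex of degree $4$ carry charge $0$; a face $f$ of length at least five carries charge $|f|-4 \ge |f|/5 > 0$, and a vertex of degree at least five carries positive charge; the only negative charges are $-1$ on each triangular face — at most $t$ of these, since each is bounded by a triangle of $G$ and here no triangle bounds two faces — and $-1$ on each vertex of degree $3$. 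Thus, if we can bound the total positive charge carried by faces of length at least five by some $O(t+\gamma)$, then $\sum_{|f|\ge 5}|f| \le 5\cdot O(t+\gamma) = K(t+\gamma)$ and we are done. The whole difficulty is that the degree-$3$ vertices, whose number is controlled neither by $t$ nor by $\gamma$, carry negative charge that a priori must be balanced by long faces.

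\emph{Reducibility.} The heart of the matter is to show that a $4$-critical $G$ as in the statement cannot contain any member of a finite list of local configurations — for instance two adjacent degree-$3$ vertices, a degree-$3$ vertex whose three incident faces are $4$-faces bounding a small quadrangulated disk, and various more elaborate patterns clustering around long faces and around triangles. Every such reducibility proof has the same skeleton: delete or identify a bounded subgraph $H$, $3$-color the smaller graph by criticality (it is a proper subgraph of $G$, or a reduction on which criticality can be re-invoked), and extend the coloring across $H$ using a precoloring-extension lemma for a disk with a short boundary cycle — exactly the statement established here, together with the extensions proved later in the series to disks carrying one triangle, to annuli, and to the ``winding number'' obstruction for quadrangulated disks. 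The absence of short separating cycles is what lets us treat these patches as essentially planar, so that Gr\"otzsch's theorem (Theorem~\ref{grotzsch}) and the disk lemmas apply; note that the triangle-free case is not reducible to Theorem~\ref{thm:thomgirth5}, since $4$-cycles are present, which is precisely the source of the difficulty.

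\emph{Discharging and conclusion.} Given the reducible configurations, fix discharging rules that move charge from long faces and high-degree vertices to nearby degree-$3$ vertices, arranged so that every long face $f$ gives away at most $(|f|-4)/2$ and every triangular face gives away at most a universal constant. Using the reducibility list, one verifies that after discharging every degree-$3$ vertex and every $4$-face has nonnegative charge and every triangular face has charge at least $-O(1)$; hence the surviving charge is nonnegative except for a total of $-O(t)$ on triangular faces and the charge remaining on long faces and high-degree vertices. Since the total charge is still $4\gamma-8$, the charge remaining on long faces and high-degree vertices is at most $4\gamma - 8 + O(t) = O(t+\gamma)$; and as each long face $f$ retains at least $(|f|-4)/2 \ge |f|/10$, this bounds $\frac{1}{10}\sum_{|f|\ge 5}|f|$ by $O(t+\gamma)$, giving the theorem.

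\emph{Main obstacle.} The hard part is the reducibility step. One needs the list of forbidden configurations to be rich enough that the discharging genuinely closes — that no ``bad'' degree-$3$ vertex survives except those chargeable to one of the $t$ triangles or to the genus — and this demands a fairly complete understanding of which $3$-colorings of a short-bounded disk (possibly containing a triangle, or of an annulus) extend to the interior. That is why the series opens with extension lemmas such as the present one; accumulating enough of them, and organizing the discharging so that the constants are uniform in $\gamma$ and $t$, is where essentially all of the work goes.
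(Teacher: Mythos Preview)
The paper does not prove Theorem~\ref{thm:corner}. It is announced in the introduction as ``the cornerstone of this series'' and deferred to a later paper; the present paper proves only the disk extension lemma Theorem~\ref{thm-6plus3} (via Theorem~\ref{thm-6plus3new}). So there is no proof here to compare your proposal against.

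That said, what you have written is a plan, not a proof, and you are candid about this. The Euler-formula bookkeeping is fine, and the overall architecture you describe---reducible configurations ruled out by precoloring-extension lemmas for short-bounded disks, followed by discharging that charges the deficit of degree-$3$ vertices to triangles and to long faces while each long face retains a fixed fraction of its length---is exactly the philosophy the introduction advertises. But the entire content of the theorem lives in the two places you explicitly leave blank: the list of reducible configurations rich enough to make the discharging close, and the verification that the rules actually leave every vertex and $4$-face nonnegative. Neither is supplied, and neither is routine; in particular, your assertion that ``two adjacent degree-$3$ vertices'' is reducible is false as stated (take any $4$-critical planar graph with a triangle, or note that quadrangulated regions are full of such pairs), so the real reducibility statements must be considerably more delicate and must interact with the $4$-faces in a way your sketch does not address. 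The honest summary is that your outline is consistent with the paper's declared strategy, but as a proof it has the same status as the paper itself on this theorem: a promissory note.
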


If $G$ has girth at least five, then $t=0$ and every face has length
at least five.
Thus Theorem~\ref{thm:corner} implies Theorem~\ref{thm:thomgirth5},
and, in fact, improves the bound given by the proof of 
Theorem~\ref{thm:thomgirth5} in~\cite{thom-surf}.
The fact that our bound in Theorem~\ref{thm:corner} is linear in the
number of triangles is needed in our solution of a problem of 
Havel~\cite{conj-havel}, as follows.

\begin{theorem}
\mylabel{havel}
There exists an absolute constant $d$ such that if $G$ is a planar
graph and every two distinct triangles in $G$ are at distance at least $d$,
then $G$ is $3$-colorable.
\end{theorem}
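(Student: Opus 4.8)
The plan is to derive a contradiction from Theorem~\ref{thm:corner}. Suppose no such $d$ exists; fix $d$ much larger than the constant $K$ of Theorem~\ref{thm:corner} (the precise requirement will come out of the counting below), and among all planar graphs with every two distinct triangles at distance at least $d$ that are not $3$-colorable, let $G$ have the fewest vertices. A minimal non-$3$-colorable subgraph of $G$ has at most as many triangles, still pairwise at distance at least $d$, so by the choice of $G$ it is $G$ itself; thus $G$ is $4$-critical, and in particular $2$-connected with minimum degree at least $3$. By the classical reductions for separating cycles of length at most four (color-permuting across a separating triangle; the standard surgery across a separating $4$-cycle, which either finds a $3$-coloring or identifies a pair of vertices, the latter decreasing $|V(G)|$ and perturbing triangle-distances by at most a bounded amount), we may assume $G$ has no separating cycle of length at most four. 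Finally, by Theorem~\ref{grotzsch} and its classical strengthening that planar graphs with at most three triangles are $3$-colorable, the number $t$ of triangles of $G$ satisfies $t\ge 4$.

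Now apply Theorem~\ref{thm:corner} with $\gamma=0$: $\sum|f|\le Kt$, the sum over all faces $f$ of $G$ with $|f|\ge 5$. Also, at most $t$ faces are triangular, since distinct triangular faces are distinct triangles. I would combine these with Euler's formula and $\delta(G)\ge 3$, after controlling the quadrilateral faces, to obtain $|V(G)|\le c_1Kt$ for an absolute constant $c_1$. On the other hand, pick one vertex in each triangle; these $t$ vertices, and $t\ge 2$, are pairwise at distance at least $d$, so the balls of radius $\lfloor(d-1)/2\rfloor$ about them are pairwise disjoint, and in a connected graph each such ball (whose center then has eccentricity at least $\lfloor(d-1)/2\rfloor$) contains at least $\lfloor(d-1)/2\rfloor+1$ vertices; hence $|V(G)|\ge c_2td$ for an absolute $c_2>0$. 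Comparing the two bounds forces $d\le c_1K/c_2$, contradicting the choice of $d$, and this completes the proof.

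The step I expect to be the main obstacle is "after controlling the quadrilateral faces": a $4$-critical planar graph really can have many quadrilateral faces, and the counting collapses unless one rules out a large quadrangulated region lying far from every triangle. The plan is to exclude such a configuration using the precoloring-extension theory of this series --- the lemma of the present paper (extension into a disk containing exactly one triangle) is one instance, and triangle-free analogues handle purely quadrangulated pieces --- since an extension of a $3$-coloring from the boundary of such a region into its interior could be spliced into a $3$-coloring of the rest of $G$ (which exists because $G$ is $4$-critical), contradicting that $G$ is not $3$-colorable. Turning this into a clean charging of the quadrilateral faces to the $O(t)$ triangular faces and faces of length at least five --- so that $|V(G)|\le c_1Kt$ genuinely follows --- is the heart of the matter, and is where the distance hypothesis and the coloring theory developed in this series do their real work.
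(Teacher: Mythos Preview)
The paper does not prove Theorem~\ref{havel}; it is announced here and deferred to the companion paper~\cite{proof-havel}, with Theorem~\ref{thm:corner} (itself proved in~\cite{proof-lincrit}) flagged as the key ingredient. So there is no proof in this paper to compare against, and your sketch should be read as an attempt to reconstruct the argument of~\cite{proof-havel} from the tools visible here.

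Your high-level plan---pass to a $4$-critical counterexample, invoke Theorem~\ref{thm:corner} to get $|V(G)|=O(t)$, and play this against the packing bound $|V(G)|=\Omega(td)$---is indeed the shape of the real argument, and your lower bound via disjoint balls is fine. But two steps are genuinely incomplete, and one of them is worse than you indicate.

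First, the reduction to ``no separating cycle of length at most four'' is not the routine surgery you describe. Separating triangles are harmless, as you say. For a separating $4$-cycle $C=v_1v_2v_3v_4$, however, identifying $v_1$ with $v_3$ creates a new triangle for every $4$-cycle $v_1xv_3y$ in that piece, and nothing prevents such a $4$-cycle from sitting right next to an existing triangle of $G$; the distance hypothesis can collapse, not merely shift by a bounded amount. The actual argument does not rely on this identification trick. Instead one uses the precoloring-extension results: if one side of $C$ is triangle-free, Theorem~\ref{grotzsch5cycle} says every $3$-coloring of $C$ extends into it, so one colors the other side by minimality and extends; if both sides contain triangles, the distance hypothesis forces at least one side to have all its triangles far from $C$, and then one needs an extension theorem for a $4$-cycle boundary with the nearest triangle far away. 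Setting this up correctly is real work, and your one-line dismissal does not do it.

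Second, you are right that bounding the quadrilateral faces is the crux, and right that Theorem~\ref{thm:corner} by itself gives nothing here: with $q$ quadrilateral faces one gets $|V|=t/2+q+O(Kt)$ from Euler's formula, and the minimum-degree inequality only bounds $q$ from below. Your proposed fix---use the extension theorems of this series to show a large quadrangulated patch far from every triangle is impossible in a $4$-critical graph---is the correct intuition, but turning it into an inequality $q=O(t)$ is exactly the content that~\cite{proof-havel} has to supply, and it requires substantially more than Theorem~\ref{thm-gimbel} and the present paper's lemma. As written, this paragraph is a statement of what remains to be proved, not a proof.
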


To prove Theorem~\ref{thm:corner} we actually prove a stronger result.
If $G$ is a graph and $R$ is a subgraph of $G$, then we say that $G$
is {\em $R$-critical} if $G\ne R$ and for every proper subgraph $G'$ of
$G$ that includes $R$ there exists a $3$-coloring of $R$ that extends to
a $3$-coloring of $G'$, but does not extend to a $3$-coloring of $G$.
The stronger version of Theorem~\ref{thm:corner} applies to $R$-critical
graphs, where $R$ has bounded size.
A special case that we need in order to carry out our inductive argument
is the following.

\begin{theorem}
\mylabel{thm:speccase}
There exists an absolute constant $L$ with the following property.
Let $G$ be a planar graph with two distinct facial cycles $R_1$ and $R_2$,
where $R_1$ has length six and $R_2$ has length four.
Assume that every cycle in $G$ of length at most four separates $R_1$ and 
$R_2$, and that every cycle in $G$ other than $R_2$ of length exactly four is disjoint
from $R_2$.
If $G$ is $R_1\cup R_2$-critical, then it has at most $L$ vertices.
\end{theorem}

The machinery we develop in the second paper~\cite{proof-druhy} 
of this series can be
applied to prove Theorem~\ref{thm:speccase}, with one notable
exception: when $G$ is disconnected, and the component $G_1$ of $G$
that contains $R_1$ has a face bounded by a triangle
(in which case this face includes the component of $G$ containing $R_2$).
We need to show that the size of $G_1$ is bounded, but we have not been able to do
so using the same methods we use for the rest of the proof of 
Theorem~\ref{thm:corner}.
Instead, we use a different method, which allows us to characterize
the components $G_1$.
Since the proof method is different, we separate it from the other
arguments and present it in this paper.
Thus our main result is as follows.
The two outcomes are illustrated in Figure~\ref{fig-6plus3}.

\begin{figure}
\begin{center}\epsfbox{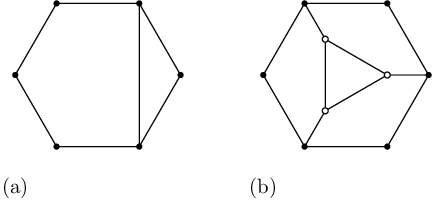}\end{center}
\caption{Critical graphs with a precolored $6$-cycle and one triangle.}
\label{fig-6plus3}
\end{figure}

\begin{theorem}
\mylabel{thm-6plus3}
Let $G$ be a plane graph with a facial cycle $R$ of length at most six,
let $T$ be a triangle in $G$, and assume that
every cycle in $G$ other than $T$ and $R$ has length at least five.
Let $\phi$ be a $3$-coloring of $R$ that does not extend to a $3$-coloring
of $G$. Then $R$ has length exactly six and either
\myitem{(a)}$\phi(u)=\phi(v)$ for two distinct vertices $u,v\in V(R)$ that are
adjacent in $G$, or
\myitem{(b)}$\phi(u_1)=\phi(u_2)=\phi(u_3)$ for three pairwise distinct
vertices $u_1,u_2,u_3\in V(R)$, where each $u_i$ is adjacent to a different
vertex of $T$.
\end{theorem}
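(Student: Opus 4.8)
The plan is to argue by induction on the number of vertices of $G$, following the now-standard Thomassen-style discharging/reducible-configuration approach used in proofs of Gr\"otzsch's theorem and its relatives. We may assume $R$ has length exactly six, since if $|R|\le 5$ then $G-(V(R)\setminus\{\text{the triangle vertices, if on }R\})$ has girth at least five inside a disk bounded by a short cycle, and the known extension results for graphs of girth five (plus a small amount of case analysis to handle the triangle $T$) give the extension. We may also assume $G$ is an edge-minimal counterexample with $\phi$ fixed; then $G$ is connected, $R$ is an induced cycle, every vertex not on $R$ has degree at least three, and there is no chord of $R$ (a chord either splits off a shorter precolored region to which induction applies, or creates a triangle or $4$-cycle contradicting the girth hypothesis unless it is the triangle $T$, which is a bounded configuration we handle directly). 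The heart of the matter is to rule out the existence of $G$ given that neither (a) nor (b) holds.

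The key steps, in order, are: (1) Reductions on the boundary: no two vertices of $R$ at distance two along $R$ have a common neighbor off $R$ of degree two or three without forcing a short separating cycle or a reducible configuration; more generally establish that $G$ has no cut-vertex, no short separating cycle except possibly one of length $\le 4$ forced by $T$, and that the only triangle is $T$ and the only $\le 4$-cycle is bounded by $T$ together with at most one additional vertex. (2) Identification moves: since neither outcome (a) nor (b) occurs, for suitable pairs of vertices of $R$ with equal $\phi$-value we may try to identify them or to contract a path; the absence of (a) means nonadjacent equal-colored boundary vertices, and the absence of (b) limits how the three triangle vertices see the boundary, so these identifications do not create the forbidden structures and yield a smaller graph $G'$ with a precolored cycle $R'$ of length $\le 6$ to which the inductive hypothesis applies. (3) Discharging: assign to each vertex $v$ charge $\deg(v)-4$ and to each face $f$ charge $|f|-4$; Euler's formula gives total charge $-8$ (for the disk, $-2|E|+2\sum_v\ldots$ bookkeeping with the outer face), redistribute charge from the big face bounded by $R$ and from $T$ to their neighbors, and use the girth-five condition on all other faces together with the minimum-degree and reducibility facts from step (1) to show every vertex and every internal face ends with nonnegative charge, contradicting the negative total. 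The triangle $T$ contributes a bounded local deficit that is absorbed by the slack coming from $|R|\le 6$ in the outer face's charge.

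The main obstacle will be step (2) together with the interaction of $T$ with the boundary in step (3): because $G$ is not quite of girth five, the usual clean reducibility lemmas must be re-proved in the presence of the single triangle, and one must carefully track the finitely many ways $T$ can be positioned relative to $R$ (disjoint from $R$, sharing a vertex, sharing an edge). In particular, when $T$ is disjoint from $R$ but each of its vertices has a neighbor on $R$, one is exactly on the boundary of outcome (b), and the argument must show that unless all three such neighbors can be chosen with a common color, an identification or local recoloring succeeds; this is the delicate case and essentially the reason the theorem's exceptional outcome (b) has the precise form it does. I expect the bulk of the work, and the place where the girth-five hypothesis on the remaining cycles is used most heavily, to be in verifying that the contractions in step (2) never produce a triangle or $4$-cycle other than $T$, so that the inductive hypothesis genuinely applies to the smaller instance.
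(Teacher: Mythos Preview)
Your architecture---minimal counterexample, structural reductions, then discharging---matches the paper's, but two technical choices diverge in ways that matter. First, the paper does not work with a fixed $\phi$. It proves a stronger, $\phi$-free statement (Theorem~\ref{thm-6plus3new}): every $R$-critical plane graph satisfying a mild ``one bad face'' condition is isomorphic to one of the two explicit graphs in Figure~\ref{fig-6plus3}; Theorem~\ref{thm-6plus3} follows in a few lines. The inductive hypothesis is thus a finite list of graphs rather than a coloring statement, and none of the reductions refer to $\phi$. Second---and this is where your plan is shakiest---all of the paper's reductions are \emph{internal}: delete the interior of a $5$-face and add one edge, or identify two or three vertices incident with a long internal face, or contract an edge of $R$ joining two degree-two vertices. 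Except for that last contraction, the boundary $6$-cycle $R$ is left untouched. Your step~(2) instead identifies two vertices of $R$ with equal $\phi$-value; this shortens and may desimplify the boundary, and can create new $\le4$-cycles in the interior (e.g., if the identified vertices have a common neighbor at distance two). Ruling such cycles out requires precisely the structural facts you list in step~(1), but in the paper those facts are themselves \emph{proved} via the internal reductions and the inductive classification, so your outline risks circularity here.

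Two smaller gaps. The case $|R|\le5$ is not soft: it is exactly Aksionov's theorem (Theorem~\ref{thm-planetr} in the paper), a nontrivial result in its own right, not ``known girth-five extension results plus a small amount of case analysis''. And the paper's engine for controlling what lives inside short cycles---used repeatedly both in the structural claims and to make the discharging close---is the Thomassen--Walls classification of $C$-critical girth-five plane graphs with $|C|\le11$ (Theorem~\ref{thm-planechar}); without invoking something of that strength, it is not clear your step~(1) or the face-by-face accounting in step~(3) can be completed.
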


Finally, let us mention a related interesting conjecture due to
Steinberg~\cite{conj-stein}, who conjectured that every planar graph 
without $4$- and $5$-cycles is $3$-colorable.
This conjecture is still open. 
Currently the best result of 
Borodin, Glebov, Montassier and Raspaud~\cite{BorGleMonRas}
shows that excluding cycles 
of lengths $4$, $5$ and $7$ suffices to guarantee $3$-colorability. 
(A proof of the same result by Xu~\cite{xu} is refuted in~\cite{BorGleMonRas}.)

\section{Auxiliary results}

In this short section we present several results that will be needed later.
Let $G$ be a graph, and let $R$ be a subgraph of $G$.
\rt{Let us recall} that $G$ is {\em $R$-critical} if $G\ne R$ and 
for every proper subgraph $G'$ of $G$ that includes $R$ as a subgraph
there exists a $3$-coloring of $R$ that extends to a $3$-coloring of $G'$,
but not to one of $G$.
Theorem~\ref{grotzsch} admits the following strengthening.

\begin{theorem}\label{grotzsch5cycle}
There is no $R$-critical triangle-free plane graph $G$, where 
$R$ is a cycle in $G$ of length at most five.
\end{theorem}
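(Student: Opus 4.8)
The plan is to argue by contradiction. Suppose $G$ is a counterexample, chosen with $|V(G)|+|E(G)|$ minimum among all $R$-critical triangle-free plane graphs where $R$ is a cycle of length at most five. By minimality, $R$-criticality can be sharpened: there is a single $3$-coloring $\phi$ of $R$ that does not extend to $G$, but extends to every proper subgraph of $G$ containing $R$. We may assume $R$ is a facial cycle (if not, $R$ bounds a closed disc $\Delta$ containing all of $G$; but then either $G = R$, contradicting $R$-criticality, or $G$ drawn in $\Delta$ with the exterior face bounded by $R$ is the relevant object anyway — in general one argues $R$ is induced and bounds a face, using Theorem~\ref{grotzsch} applied to the part of $G$ outside $R$). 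The first batch of reductions is then standard for this kind of "critical graph extending a precoloring" argument:

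\begin{itemize}
\item[(i)] Every vertex of $V(G)\setminus V(R)$ has degree at least $3$: a vertex of degree at most $2$ can be deleted and its color recovered, contradicting criticality.
\item[(ii)] $R$ is an induced cycle, and there is no vertex adjacent to three or more vertices of $R$ (such short chords or "fans" would let us split $G$ along a short cycle and apply minimality or Theorem~\ref{grotzsch} to the pieces).
\item[(iii)] $G$ has no separating cycle of length at most five: such a cycle $C$ splits $G$ into two parts $G_1\supseteq R$ and $G_2$; extend $\phi$ to $G_1$ (possible by minimality since $G_1$ is a proper subgraph), then the induced coloring of $C$ extends to $G_2$ — here one needs that $G_2$ with its outer cycle $C$ of length $\le 5$ is not $C$-critical, which is exactly the minimality of $G$ unless $G_2 = C$; this gives an extension to all of $G$, a contradiction. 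A short non-separating... in the plane every cycle is separating, so this covers all short cycles, and combined with triangle-freeness it means every face other than the one bounded by $R$ has length at least $5$, wait — length at least $4$; we must also rule out $4$-faces sharing structure, but at minimum every internal face has length $\ge 4$ and in fact the absence of separating $\le 5$-cycles forces all internal faces to have length $\ge 5$ except possibly $4$-faces, which we keep track of.
\end{itemize}

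With these structural constraints in hand, the heart of the proof is a discharging argument on the plane graph $G$. Assign to each vertex $v$ the charge $d(v)-4$ and to each face $f$ the charge $|f|-4$; by Euler's formula the total charge is $-8$. The outer face $R$ has charge $|R|-4 \le 1$. Internal faces have non-negative charge (length $\ge 4$), and all vertices not on $R$ have degree $\ge 3$, so only degree-$3$ vertices and the vertices/edges of $R$ carry negative charge. One sends charge from $R$ and from the large internal faces to the degree-$3$ vertices via appropriate rules (e.g.\ each face sends $\tfrac13$ or $\tfrac15$ to incident degree-$3$ vertices, $R$ absorbs the deficit of nearby vertices), and shows that after discharging every vertex and every internal face has non-negative charge while $R$ retains charge at most, say, $7$, contradicting the total of $-8$. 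Getting the constants to work is the crux: one must exploit that $R$ is chordless, that no vertex has two neighbors on $R$ at distance $2$ along $R$ forming a triangle with a chord-path (triangle-freeness), and that internal $4$-faces cannot be adjacent to $R$ or to each other in bad configurations (else a separating short cycle appears). I expect the delicate case analysis around $4$-faces incident with degree-$3$ vertices and around the boundary $R$ to be the main obstacle — this is where the bound $|R|\le 5$ is used in an essential way, and where one would, if pushed to $|R| = 6$, find genuine counterexamples as in Theorem~\ref{thm-6plus3}.

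An alternative to discharging, which may be cleaner here, is to reduce directly to Theorem~\ref{grotzsch}: cap off the outer face of $G$ by adding a new vertex $r$ joined to all of $V(R)$, obtaining a plane graph $G^+$. If $|R|\le 4$ this $G^+$ can be made triangle-free only after subdividing, and $\phi$ corresponds to a constraint on the neighborhood of $r$; a short argument shows a $3$-coloring of $G$ extending $\phi$ exists iff $G^+$ (suitably modified) has no obstruction, and triangle-freeness plus planarity plus Theorem~\ref{grotzsch} applied to $G^+ - r = G$ itself finishes it when $|R|\le 4$. The case $|R|=5$ is the one requiring real work and is presumably where the authors invoke either a folding/identification trick (identify two vertices of $R$ at distance $2$ to shrink $R$ to a $4$-cycle or a triangle, check that no triangle or short cycle is created using that $G$ has no separating $\le 5$-cycle, then apply the $|R|\le 4$ case or Grötzsch) or Thomassen's precoloring-extension theorem for $5$-cycles in triangle-free planar graphs directly. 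I would organize the final writeup around this reduction, isolating the $5$-cycle case and handling it by vertex identification, with the no-separating-short-cycle property doing the work of guaranteeing the identified graph is still triangle-free and planar with the right criticality.
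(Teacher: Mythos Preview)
The paper does not prove Theorem~\ref{grotzsch5cycle}. It is stated without proof as a known strengthening of Gr\"otzsch's theorem (Theorem~\ref{grotzsch}); the surrounding text simply says ``Theorem~\ref{grotzsch} admits the following strengthening'' and then moves on to cite further extensions by Gimbel--Thomassen and by Aksionov. So there is no in-paper argument to compare your proposal against.

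As for the proposal itself: it is a plan rather than a proof, and the discharging portion is not in a state where it can be checked. You set up initial charges and gesture at rules, but you never fix the rules, never verify non-negativity face by face, and the aside ``wait --- length at least~$4$; we must also rule out $4$-faces sharing structure, but \ldots'' shows you have not actually decided how $4$-faces are handled. A discharging proof lives or dies on exactly those details, so as written this part carries no weight. Your alternative route at the end --- reduce $|R|\le 4$ to Gr\"otzsch directly, and handle $|R|=5$ by identifying two vertices of $R$ at distance~$2$, checking no short cycle is created, and invoking the shorter case --- is in fact the standard way this result is obtained (this is essentially how Thomassen and others derive the precoloring extension from Gr\"otzsch), and it is the line you should pursue if you want an actual proof. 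The one point to be careful about there is that identifying two vertices of a $5$-cycle can create a triangle if those vertices have a common neighbour; you need the no-separating-$\le 5$-cycle reduction (your item~(iii)) precisely to rule that out, and you should state and use it cleanly rather than leaving it implicit.
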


This result was later strengthened in several ways.  
Gimbel and Thomassen~\cite{gimbel} extended this to cycles of length six:

\begin{theorem}\label{thm-gimbel}
Let $G$ be a plane triangle-free graph with a facial cycle $R$ of length six. 
If $G$ is $R$-critical, then all faces of $G$ distinct from $R$ have length four.
\end{theorem}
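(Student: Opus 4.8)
The plan is to prove Theorem~\ref{thm-gimbel} by contradiction, taking $G$ to be a counterexample minimizing $|V(G)|+|E(G)|$: so $G$ is triangle-free, plane, $R$-critical, $R=v_1v_2\cdots v_6$ is facial, and some face $f\ne R$ has length at least five (equivalently not four, since triangle-freeness forces every face to have length at least four). First I would record the standard consequences of $R$-criticality. The graph $G$ is $2$-connected, for a cut vertex $v$ would split off a piece $G_2$ attached only at $v$, and any $3$-colouring of $R$ extending to the piece $G_1\supseteq R$ would colour $v$ and then extend over $G_2$ by Theorem~\ref{grotzsch}, so $G_1$ would already witness non-criticality. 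Every vertex off $R$ has degree at least three, since one of degree at most two could be coloured last. Every $4$-cycle $C$ bounds a face, for otherwise, letting $H$ be the part of $G$ on the side of $C$ away from $R$ (a triangle-free graph with facial cycle $C$), Theorem~\ref{grotzsch5cycle} extends every proper $3$-colouring of $C$ into $H$, so deleting the interior of $H$ does not change which colourings of $R$ extend, again contradicting criticality. Finally a chord of $R$ can only be an antipodal chord $v_iv_{i+3}$ — a shorter one would close a triangle — and if such a chord is present the same argument, applied to the two facial $4$-cycles it bounds, forces $G=R\cup\{v_iv_{i+3}\}$, whose two internal faces are $4$-faces, so $G$ is not a counterexample. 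Hence we may assume $R$ is chordless.

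Thus $G$ is a $2$-connected triangle-free near-quadrangulation of the disc bounded by $R$, with every $4$-cycle facial and with at least one face $f\ne R$ of length at least five. I would then apply Euler's formula in the form $\sum_v(\deg v-4)+\sum_f(|f|-4)=-8$ and a discharging argument, moving the positive charge of the outer hexagon and of the faces of length $\ge 5$ toward the degree-$3$ vertices, to force a reducible configuration near $f$ — essentially a degree-$3$ vertex, or a short chain of such vertices, lying on $f$ and otherwise surrounded by $4$-faces. Such a configuration is removed by a local operation: identifying the two vertices lying opposite each other across one of those $4$-faces (which are non-adjacent and share a face, so the identification stays planar), or deleting a low-degree vertex and re-adding a short path. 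One checks, using chordlessness of $R$ and the fact that all $4$-cycles are facial, that this operation keeps the graph triangle-free and produces a strictly smaller triangle-free $R$-critical graph that still has a facial hexagon and a face of length $\ge 5$ — contradicting minimality. Therefore no face other than $R$ has length $\ge 5$; since triangle-freeness forbids length $3$, every face distinct from $R$ has length exactly four.

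The crux is the reducibility step. Its two difficulties are that a naive deletion or contraction near $f$ can introduce a triangle or a non-facial $4$-cycle — which is why the structural reductions (chordless $R$, all $4$-cycles facial, $2$-connectivity) are needed first — and that one must certify that the chosen local operation simultaneously reduces the size, preserves $R$-criticality, and keeps a face of length $\ge 5$, which forces a case analysis over how $f$ meets $R$ and over the local face pattern at the reducible vertex. Designing the discharging rules so that the configuration they pinpoint is at once forced by Euler's formula and reducible is where the real work lies, rather than in any single isolated trick.
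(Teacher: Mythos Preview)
The paper does not give its own proof of Theorem~\ref{thm-gimbel}; it is quoted as a known result of Gimbel and Thomassen~\cite{gimbel} and used as a black box. So there is nothing to compare against directly, and the question becomes whether your sketch stands on its own.

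Your preliminary reductions are sound. The $2$-connectivity argument, the minimum-degree-three property off $R$, the fact that every $4$-cycle bounds a face (via Theorem~\ref{grotzsch5cycle} applied to the inside), and the elimination of the chord case are all correct and standard for this type of problem.

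The substantive gap is everything after that. You announce a discharging argument and a reducibility step, but you do not give the discharging rules, you do not specify the reducible configuration, and you do not carry out the reduction. Two concrete points where the plan is fragile:
\begin{itemize}
\item You require that the reduced graph ``still has \ldots\ a face of length $\ge 5$''. But the natural operations you mention (identifying two opposite vertices of a $4$-face adjacent to $f$, or deleting a degree-$3$ vertex on $f$) typically \emph{shorten} $f$ or merge it with neighbouring faces. You need either a reduction that provably preserves a long face, or a different inductive invariant; as written, the minimality hypothesis does not bite.
\item Identifying two vertices across a $4$-face can create a triangle whenever those two vertices have a common neighbour. Ruling this out needs more than ``all $4$-cycles are facial'': you need that the two vertices are at distance at least three, which in turn requires control over $5$- and $6$-cycles through them. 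None of that is established.
\end{itemize}
You are candid that ``the real work lies'' in designing rules that simultaneously force and reduce a configuration; but that is exactly the content of the theorem, and it is absent here. What you have is a plausible outline, not a proof.
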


For graphs of girth at least five the above results can be strengthened, 
as shown by Thomassen~\cite{thom-surf} and Walls~\cite{WalPhD}.

\junk{
More can be said about the critical graphs of girth $5$.  
shows the following characterization\footnote{Theorem~\ref{thm-planechar}
was originally formulated in \cite{thom-surf} in a slightly different
setting:  the described graphs have the property that there exists a precoloring of $R$ that does
not extend to $H$, but extends to every proper subgraph of $H$ that contains $R$.  However,
every $\{R\}$-critical graph $G$ has a subgraph $G'$ with this property.  Also, every face of $G'$ of
length at most $7$ is also a face of $G$ (see Lemma~\ref{lemma-crsub} below).
We conclude that $G$ is one of the graphs described in Theorem~\ref{thm-planechar}.}
of critical graphs with precolored face of length at most $12$:
}

\begin{theorem}
\mylabel{thm-planechar}
Let $H$ be a plane graph of girth at least five,
and let $C$ be a facial cycle in $H$ of length  $k\le11$.
If $H$ is $C$-critical, then 
\begin{itemize}
\item[(a)] $k\ge8$, $V(H)=V(C)$ and $C$ is not induced, or
\item[(b)] $k\ge 9$, $H-V(C)$ is a tree with at most $k-8$ vertices, 
           and every vertex of $V(H)-V(C)$ has degree three in $H$, or
\item[(c)] $k\ge 10$ and $H-V(C)$ is a connected graph with at most $k-5$ vertices
           containing exactly one cycle, and the length of this cycle is five.
           Furthermore, every vertex of $V(H)-V(C)$ has degree three in $H$.
\end{itemize}
\end{theorem}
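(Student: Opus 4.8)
The plan is to combine standard reductions for criticality with an Euler-formula (discharging) argument, using the hypothesis $k\le 11$ to force the interior $H-V(C)$ to be extremely simple. Throughout I embed $H$ in the plane so that $C$ bounds the outer face, and I write $n'=|V(H)\setminus V(C)|$ for the number of interior vertices.

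First I would record the reductions that follow directly from $C$-criticality together with girth at least five. Criticality forces $H$ to be $2$-connected (a cut vertex or a bridge would let one combine colorings of the blocks), so every face is bounded by a cycle of length at least five, while the outer face $C$ has length $k$. Next, every interior vertex has degree at least three: a vertex of $V(H)\setminus V(C)$ of degree at most two could be colored after its neighbors, so deleting it would produce a proper subgraph to which exactly the same precolorings of $C$ extend, contradicting criticality. Finally, Theorem~\ref{grotzsch5cycle} rules out $k\le 5$, and Theorem~\ref{thm-gimbel} rules out $k=6$ (an $R$-critical graph with $|R|=6$ would need a face of length four, which girth five forbids), giving a base case $k\ge 7$ from which the reductions can terminate.

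Next I set up the charge count. Assign to each vertex $v$ the charge $\deg(v)-4$ and to each face $f$ the charge $|f|-4$; since $\sum_v\deg(v)=\sum_f|f|=2|E(H)|$, Euler's formula gives total charge exactly $-8$. The outer face contributes $k-4\le 7$, each interior face contributes at least $1$, and each interior vertex contributes at least $-1$. The heart of the argument is a short list of reducibility lemmas showing that a $C$-critical girth-five graph cannot contain the configurations that would create ``slack'' in this count: I would prove that every interior vertex in fact has degree exactly three, that $H-V(C)$ is connected, and---most importantly---that $H-V(C)$ contains at most one cycle, and that if it contains a cycle then that cycle has length exactly five (it cannot be shorter by girth, and a longer or a second cycle would let one delete an edge or a path without changing which precolorings of $C$ extend). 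Each such lemma is proved by exhibiting, for the offending configuration, a proper subgraph $H'\supseteq C$ to which every precoloring extending to $H'$ also extends to $H$, contradicting criticality.

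Once the interior is known to be either empty, a tree, or connected with a single five-cycle, the remaining conclusions are pure Euler-formula bookkeeping. If $n'=0$ then $H\ne C$ forces a chord, whose two incident faces must each have length at least five, so $k\ge 8$; this is case~(a). If $H-V(C)$ is a tree with $t\ge 1$ vertices, all of degree three, then counting edges and faces and imposing that each of the $t+2$ interior faces has length at least five yields $k\ge t+8$, giving both $k\ge 9$ and $t\le k-8$; this is case~(b). If $H-V(C)$ is connected with exactly one cycle (necessarily of length five, hence $t\ge 5$), the same count gives $k\ge t+5$, hence $k\ge 10$ and $t\le k-5$; this is case~(c), and $k=7$ is excluded because no admissible structure survives. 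The main obstacle is the reducibility step bounding the number of cycles in the interior: the Euler count alone permits a modestly sized bicyclic interior when $k\le 11$, so ruling it out requires the genuinely combinatorial argument that such an interior contains a reducible edge or path, and organizing this case analysis so that it simultaneously pins every interior degree to three is the delicate part of the proof.
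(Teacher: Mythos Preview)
The paper does not prove this theorem from scratch at all: it simply observes that $C$-criticality gives a $3$-coloring of $C$ that does not extend to $H$, invokes \cite[Theorem~2.5]{thom-surf} or \cite[Theorem~3.0.2]{WalPhD} to produce a subgraph $H'\supseteq C$ satisfying one of (a)--(c), and then argues in three lines that $H'=H$ (otherwise some coloring of $C$ extends to $H'$ but not $H$, and applying the same cited theorems face-by-face inside $H'$ extends it to $H$). So the actual work is outsourced to Thomassen and Walls; the paper contributes only the passage from ``there is a bad subgraph of this shape'' to ``the $C$-critical graph itself has this shape.'' Your proposal, by contrast, attempts the full structural classification directly by discharging and reducibility. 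That is a legitimate and more self-contained route, and your Euler bookkeeping in the final paragraph is correct once the structural claims are in hand.

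The gap is exactly where you say it is, and it is larger than you indicate. You assert three reducibility lemmas---every interior vertex has degree \emph{exactly} three, $H-V(C)$ is connected, and $H-V(C)$ has at most one cycle, necessarily a $5$-cycle---and promise that each is proved by ``exhibiting a proper subgraph $H'\supseteq C$ to which every precoloring extending to $H'$ also extends to $H$.'' But none of these is a one-line reduction. An interior vertex of degree four is not reducible in $3$-coloring in the naive sense (its four neighbours can use all three colours), so pinning interior degrees to three cannot be done before the Euler count; in Thomassen's and Walls' arguments this comes out of the discharging, not the other way around. Likewise, ``a longer or a second cycle would let one delete an edge or a path without changing which precolorings of $C$ extend'' is not a valid general principle---deleting an edge of an interior cycle can very well enlarge the set of extendable precolorings---and the actual proofs in \cite{thom-surf,WalPhD} use more delicate tools (identifications of vertices, list-coloring arguments on $5$-faces) rather than mere edge deletions. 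Your outline is therefore a plan for reproducing Thomassen's theorem, not a proof of it; the paper sidesteps all of this by citing the result.
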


\noindent
{\bf Proof.}
Since $H$ is $C$-critical, there exists a $3$-coloring of $C$ that
does not extend to a $3$-coloring of $G$.
Theorem~2.5 of \cite{thom-surf} (as well as independently proved Theorem~3.0.2 of \cite{WalPhD})
states that there exists a subgraph $H'$ of $H$
such that $C$ is a subgraph of $H'$ and $H'$ satisfies one of (a)--(c).
We claim that $H=H'$. Indeed, otherwise the $C$-criticality of $H$
implies that there exists a $3$-coloring of $C$ that does not extend to
a $3$-coloring of $H$, but extends to a $3$-coloring $\phi$ of $H'$.
By applying~\cite[Theorem~2.5]{thom-surf} or~\cite[Theorem~3.0.2]{WalPhD}  
to every face of $H'$
we deduce that $\phi$ extends to every face of $H'$, and hence extends
to a $3$-coloring of $H$, a contradiction.~\qed
\bigskip

We will need a version of Theorem~\ref{grotzsch5cycle} that allows
the existence of a triangle.
Such a result was attempted by Gr\"unbaum~\cite{grunbaum}, but his proof
is not correct.
A correct proof was found by Aksionov~\cite{aksenov}.  
The result of Aksionov together with Theorem~\ref{thm-gimbel}
gives the following characterization of critical graphs with a triangle 
and a precolored face of length at most five.

\begin{theorem}\label{thm-planetr}
Let $G$ be a plane graph with a facial cycle $R$ of length at most five and at 
most one triangle $T$ distinct from $R$.
If $G$ is $R$-critical, then $R$ has length exactly five, 
$T$ shares at least one edge
with $R$ and all faces of $G$ distinct from $T$ and $R$ have length exactly four.
\end{theorem}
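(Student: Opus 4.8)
The plan is to run the whole argument through two engines: Aksionov's theorem~\cite{aksenov}, which supplies the ``at most one triangle'' strengthening of Theorem~\ref{grotzsch5cycle} and forces the length and edge-sharing conclusions, and Theorem~\ref{thm-gimbel}, which I reach by deleting the edge shared by $T$ and $R$ to pass to a triangle-free graph with a facial $6$-cycle. Throughout I use that $R$-criticality yields a $3$-coloring $\psi$ of $R$ not extending to $G$, and, more usefully, that every edge and vertex outside $R$ is \emph{needed}: deleting any part of $G$ produces a graph to which some non-extending coloring of $R$ does extend. A first consequence is that $T$ is a \emph{facial} triangle: the subgraph drawn inside $T$ is triangle-free with facial $3$-cycle $T$, so by Theorem~\ref{grotzsch5cycle} every coloring of $T$ extends inside it; hence a vertex strictly inside $T$ could be deleted without changing the extendability of any coloring of $R$, contradicting $R$-criticality.

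Next I establish conclusions (a) that $|R|=5$ and (b) that $T$ shares an edge with $R$. If $G$ is triangle-free, Theorem~\ref{grotzsch5cycle} contradicts $R$-criticality, so $G$ has a triangle $T\neq R$. When $|R|\le 4$ and $R$ is not itself a triangle, $G$ has at most the single triangle $T$ and Aksionov's theorem gives that $\psi$ extends, a contradiction; the exceptional case $|R|=3$ (where $R$ and $T$ are two triangles) is handled by the observation that a prescribed coloring of the \emph{facial triangle} $R$ extends as soon as $G$ is $3$-colorable, which holds because a graph with so few triangles cannot be $4$-critical. Thus $|R|=5$. Similarly, if the triangle $T$ shares no edge with $R$, then deleting an edge of $T$ leaves $R$ facial of length five in a triangle-free graph, and Aksionov's theorem again forces $\psi$ to extend; so $T$ must share an edge with $R$, say $c_1c_2$, where $R=c_1c_2c_3c_4c_5$ and $T=c_1c_2z$.

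For the face structure (c) I set $G'=G-c_1c_2$. Since $T$ is a facial triangle sharing $c_1c_2$ with the outer face $R$, deleting $c_1c_2$ merges these two faces, and $G'$ has outer boundary $R'=c_1zc_2c_3c_4c_5$; moreover $G'$ is triangle-free, as the only triangles of $G$ were $T$ (now destroyed) and $R$ (not a triangle, since $|R|=5$). I may assume $z\notin V(R)$, since otherwise $T$ is cut off by a chord of $R$ and a short separate analysis via Theorem~\ref{grotzsch5cycle} on the two sides applies. Then $R'$ is a genuine $6$-cycle: its six vertices are distinct, and $z$ cannot be adjacent to $c_3$ or $c_5$, as that would create a second triangle ($zc_2c_3$ or $zc_1c_5$) distinct from $T$; a possible chord $zc_4$ is harmless for Theorem~\ref{thm-gimbel}. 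Once I verify that $G'$ is $R'$-critical, Theorem~\ref{thm-gimbel} shows that every face of $G'$ other than $R'$ has length four, and re-adding $c_1c_2$ splits $R'$ back into $R$ and $T$ while leaving all other faces unchanged; hence every face of $G$ other than $R$ and $T$ is a $4$-face, as required.

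The main obstacle is precisely this last transfer of criticality. Theorem~\ref{thm-gimbel} only describes the faces of an $R'$-\emph{critical} graph, so I cannot merely note that the non-extending coloring exhibits an $R'$-critical \emph{subgraph} of $G'$; I must show $G'$ itself is $R'$-critical. This rests on two correspondences: every coloring of $R$ in $G$ matches a coloring of $R'$ in $G'$ (the edge $c_1c_2$ imposes no new constraint once $\psi(c_1)\neq\psi(c_2)$, and the edges $zc_1,zc_2$ force $z$ to the third color), and $G-e$ corresponds to $G'-e$ for interior edges $e$. Pushing $R$-criticality of $G$ through these correspondences—while carefully accounting for the two edges $zc_1,zc_2$, which pass from the interior onto $R'$, and for the degenerate possibilities for $R'$—is where essentially all of the bookkeeping lives, and is the step I expect to require the most care.
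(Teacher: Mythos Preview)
The paper does not actually supply a proof of this theorem: it simply states that ``the result of Aksionov together with Theorem~\ref{thm-gimbel} gives the following characterization'' and then records Theorem~\ref{thm-planetr}. Your plan---use Aksionov's extension theorem to force $|R|=5$ and the shared edge, then delete that edge to pass to a triangle-free graph with a facial $6$-cycle and invoke Theorem~\ref{thm-gimbel}---is exactly the derivation that sentence leaves to the reader. Your transfer of $R$-criticality from $G$ to $G'=G-c_1c_2$ is carried out correctly: given a proper subgraph $H'\supseteq R'$ of $G'$, the graph $H'+c_1c_2$ is a proper subgraph of $G$ containing $R$, the witnessing $3$-colouring of $R$ lifts uniquely to a colouring of $R'$ (since $z$ is forced to the third colour), and an extension to $G'$ would give one to $G$ because $\psi(c_1)\neq\psi(c_2)$.

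There is, however, a genuine error in your argument that $T$ must share an edge with $R$. Deleting an edge $e\in E(T)$ does produce a triangle-free plane graph with $R$ facial of length five, and Theorem~\ref{grotzsch5cycle} (not Aksionov---you have mislabelled it) then shows that every $3$-colouring of $R$ extends to $G-e$. But that is all it shows: the extension to $G-e$ may well colour the two ends of $e$ the same, and you obtain no colouring of $G$ and hence no contradiction. This is perfectly consistent with $G$ being $R$-critical. The conclusion that $T$ shares an edge with $R$ is part of what Aksionov's extension theorem itself proves; it is not something you can recover from the triangle-free case by deleting an edge of $T$. Since you already invoke Aksionov as a black box for the $|R|\le4$ case, you should do the same here and drop the deletion argument. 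The rest of the reduction, including the face-structure conclusion via Theorem~\ref{thm-gimbel}, is sound.
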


\section{Graphs with one triangle}\label{sec-6plus3}

To prove Theorem~\ref{thm-6plus3} we prove, for the sake of the
inductive argument, the following slightly more general result.
Theorem~\ref{thm-6plus3} will be an immediate corollary.

\begin{theorem}
\mylabel{thm-6plus3new}
Let $G$ be a plane graph with outer cycle $R$ of length at most six
and assume that
\myitem{($*$)}there exists a point $p$ in the plane such that for every cycle $C$ in $G$ 
   of length at most four, the open disk bounded by $C$ contains $p$.\ppar
\noindent
If $G$ is $R$-critical, then 
$R$ has length exactly six
and $G$ is isomorphic to one of the graphs depicted in Figure~\ref{fig-6plus3}.
\end{theorem}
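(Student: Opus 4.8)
The plan is to prove Theorem~\ref{thm-6plus3new} by induction on $|V(G)|+|E(G)|$, with Theorems~\ref{grotzsch5cycle}, \ref{thm-gimbel}, \ref{thm-planechar} and~\ref{thm-planetr} serving both as base cases and as tools. Fix a $3$-coloring $\phi$ of $R$ not extending to $G$; it exists since $G$ is $R$-critical, and by definition $G\ne R$. A routine argument shows every vertex of $V(G)\setminus V(R)$ has degree at least three. I read~($*$) as: every cycle of $G$ of length at most four has $f_0$ in its bounded interior; then the interiors of any two such cycles meet, and since two cycles of a plane graph whose interiors meet are nested or ``cross'', and crossing cycles cannot both have a common face in their interiors, the cycles of length at most four form a nested family. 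I will use repeatedly the \emph{patching principle}: if $C$ is a cycle of $G$ other than $R$, $\Delta$ is the closed disk bounded by $C$ not containing the outer face, the subgraph $G_\Delta$ drawn in $\Delta$ properly contains $C$, and every $3$-coloring of $C$ extends to $G_\Delta$, then $G$ is not $R$-critical (delete the interior of $\Delta$, obtaining a proper subgraph containing $R$; any coloring extending to it extends over $G_\Delta$ since they agree on $C$). To produce such extensions I note that, by a minimality argument, every $3$-coloring of a facial cycle $C$ of a plane graph extends to the graph provided either $|C|\le5$ and the graph is triangle-free or has at most one triangle off $C$ (Theorems~\ref{grotzsch5cycle}, \ref{thm-planetr}), or $|C|\le7$ and the graph has girth at least five (Theorem~\ref{thm-planechar}); if some coloring failed to extend, a minimal offending subgraph would be $C$-critical, which these theorems forbid for such $C$.

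First I would eliminate chords of $R$. A chord $e=xy$ cuts the disk bounded by $R$ into closed disks $D_1\supseteq f_0$ and $D_2$, with boundary cycles $R_i\subseteq\partial D_i$ and $|R_1|+|R_2|=|R|+2$, so $|R_i|\le|R|-1\le5$. A cycle of length at most four drawn in $D_2$ has its interior inside $D_2$, hence avoids $f_0$, contradicting~($*$); so the subgraph drawn in $D_2$ has girth at least five, whence (every $3$-coloring of $R_2$ extends to it) the patching principle forces it to equal $R_2$. Then the subgraph $G_1$ drawn in $D_1$ is $R_1$-critical unless $G_1=R_1$; in the former case $G_1$ is strictly smaller and inherits~($*$) with $f_0$, so induction gives $|R_1|=6$, contradicting $|R_1|\le5$. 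Hence $G_1=R_1$ and $G=R\cup\{e\}$; since the two cycles into which $e$ splits $R$ have disjoint interiors, at most one has length at most four, and with $|R|\le6$ this forces $|R|=6$ with the shorter one a triangle, so $G$ is the ``hexagon with a short chord'' of Figure~\ref{fig-6plus3} and we are done. A similar, easier argument with cut vertices reduces to $G$ being $2$-connected, so that every face is bounded by a cycle and $R$ is induced.

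Next I would produce the triangle and locate $f_0$. If $G$ is triangle-free, every cycle of length at most four is a $4$-cycle; if there is none, $G$ has girth at least five and Theorem~\ref{thm-planechar} with $|R|\le6<8$ contradicts $R$-criticality, while if there is one, $|R|\le5$ contradicts Theorem~\ref{grotzsch5cycle} and $|R|=6$ makes $G$ a quadrangulation of the hexagonal disk by Theorem~\ref{thm-gimbel}, which by an Euler-formula count has at least two internal $4$-faces --- impossible, since two distinct faces cannot both contain $f_0$ in their interiors. So $G$ has a triangle; the triangles are nested, so there is an innermost one, $T$. The subgraph $G_T$ drawn in the closed disk bounded by $T$ has $T$ as its only triangle, so every $3$-coloring of $T$ extends to $G_T$ (Theorem~\ref{thm-planetr}), and by the patching principle $G_T=T$; hence $T$ bounds a face, which by~($*$) applied to $T$ is $f_0$. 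The same argument applied to the subgraph drawn inside any cycle of length at most four other than $T$ --- such a cycle surrounds $T$, so that subgraph has at most one triangle off its facial cycle, namely $T$ --- shows no such cycle exists; thus \emph{$T$ is the unique cycle of $G$ of length at most four}, and $G$ has girth at least five apart from $T$.

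It remains to determine $G$ explicitly. Since $R$ is chordless, $T$ meets $R$ in nothing, in one vertex, or in one edge, and the case analysis runs over these three possibilities. In each case the strategy is to \emph{excise $T$} --- deleting the edge(s) of $T$ off $R$, and the vertices of $T$ not on $R$ together with their remaining edges (having first observed that no two vertices of $T$ share a neighbour off $T$, as such a neighbour would create a short cycle not surrounding $T$) --- to obtain a strictly smaller plane graph $G'$ with outer cycle of length at most six. One then checks that $G'$, after restoring a triangle on the (at most three) neighbours the vertices of $T$ had off $T$, is $R'$-critical for a suitable facial cycle $R'$ and carries a version of~($*$) anchored at the new triangular face, so that the inductive hypothesis applies (in degenerate sub-cases, Theorem~\ref{thm-planechar} applies directly); translating a non-extendable $\phi$ through the excision --- using that a $3$-coloring of $G-E(T)$ extending $\phi$ fails to extend to $G$ exactly when it is not injective on $V(T)$, equivalently (when $T$ is disjoint from $R$) assigns one colour to all three neighbours of $T$ in $R$ --- forces $|R|=6$ and identifies $G$ with the ``hexagon with a central triangle and three spokes'' of Figure~\ref{fig-6plus3}, or again with the ``short chord'' graph when $T$ meets $R$. \textbf{The main obstacle} is exactly this last stage: organising the three-way case analysis and verifying that each excision both strictly decreases the size and preserves criticality together with a correctly anchored instance of~($*$) --- since the newly created short cycles through the neighbours of $T$ must still all surround a common face --- and carrying out the coloring-extension bookkeeping around $T$, where the triangle constraint interacts with the precoloring of $R$; this is where the real work of the proof lies.
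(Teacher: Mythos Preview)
Your setup is sound and matches the paper closely: the patching principle, the elimination of chords of $R$, the proof that $T$ exists, bounds a face, and is the \emph{unique} cycle of length at most four, all go through essentially as you describe and correspond to claims~(1)--(8) in the paper. The paper likewise disposes of the cases where $T$ meets $R$ (in an edge or a vertex) by short arguments using Theorem~\ref{thm-planechar} on the complementary region, as you suggest.

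The gap is in your endgame when $T$ is disjoint from $R$. Your ``excise $T$'' reduction presupposes that each vertex $t_i$ of $T$ has \emph{exactly one} neighbour $x_i$ off $T$; only then does ``restoring a triangle on the (at most three) neighbours'' make sense, and only then does the colouring bookkeeping (``a $3$-colouring of $G-E(T)$ fails to extend iff it assigns one colour to all three neighbours'') hold. But nothing you have proved forces $\deg(t_i)=3$. With a high-degree vertex on $T$, deleting $V(T)$ and adding one triangle does not control extendability, and the reduced graph need not be $R$-critical. Even granting $\deg(t_i)=3$ for all $i$, your inductive conclusion would say that $(G-V(T))+\{x_1x_2,x_2x_3,x_3x_1\}$ is a Figure~\ref{fig-6plus3} graph, hence each $x_i$ is adjacent to $R$; this places $T$ at distance two from $R$, not one, so you do \emph{not} recover Figure~\ref{fig-6plus3}(b) for $G$ but rather a strictly larger configuration that you must still rule out. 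In short, the excision can at best yield an intermediate structural claim (``not all $t_i$ have degree three'', or ``$T$ is at distance $\ge 2$ from $R$''), not the identification of $G$.

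The paper proceeds very differently past this point. After showing $T$ is disjoint from $R$ and in fact at distance at least two (claims~(9)--(12)), it proves a sequence of structural facts --- there is no $5$-face of internal degree-three vertices; every face other than $R$ and $T$ has length exactly five; every $5$-face with four internal degree-three vertices shares an edge with $T$; constraints on degree-two vertices of $R$ --- several of which are obtained by local identification/contraction reductions of your flavour, but applied to $5$-faces rather than to $T$. The proof is then completed by a discharging argument: assign charge $|f|-4$ to faces and $\deg(v)-4$ to vertices (with corrections at $R$ and $T$), show the total is negative, redistribute so that every vertex and the faces $R,T$ end non-negative, and derive a contradiction from the existence of a $5$-face with negative final charge. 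Your reduction idea is a genuine ingredient, but it does not by itself finish the proof; the discharging is where the real work lies.
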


\noindent
{\bf Proof}.
Let $G$ be as stated, and suppose for a contradiction that it 
is not isomorphic to either of the two graphs depicted in
Figure~\ref{fig-6plus3}.
By Theorems~\ref{grotzsch5cycle} and~\ref{thm-gimbel} 
the graph $G$ has a triangle $T$.
We may assume that $G$ is minimal in the sense that the theorem holds
for every graph with fewer vertices.
A vertex $v\in V(G)-V(R)$ will be called {\em internal}. 
The $R$-criticality of $G$ implies that

\myclaim{Z}{every internal vertex of $G$ has degree at least three.}

If $C$ is a cycle in $G$, then by ins$(C)$ we denote the subgraph of $G$
consisting of all vertices and edges drawn in the closed disk bounded by $C$.
More generally, suppose that $C$ is a closed walk in $G$ such that the
subgraph of $G$ consisting of the vertices and edges of $C$ has a
bounded face $\Delta$ incident with all the edges of $C$.  It follows that
$\Delta$ is unique and homeomorphic to an open disk.  In this case, we say that $C$
\emph{bounds the open disk $\Delta$} and we define ins$(C)$ to be the subgraph of $G$
consisting of all vertices and edges drawn in the closure of $\Delta$.


\myclaim{Y}{\rt{Let $C$ be a closed walk in $G$ that bounds an open disk
    $\Delta$, and assume that $\Delta$ includes at least one vertex or
    edge of $G$. Then}
    {\rm ins}$(C)$ is a $C$-critical graph.}

To prove \refclaim{Y} let $C$ be as stated, let 
$G'$ be obtained from $G$ by deleting every vertex and edge of $G$
drawn in $\Delta$, and
let $J$ be a proper subgraph of ins$(C)$ that includes $C$.
Then some $3$-coloring of $R$ extends to a $3$-coloring $\phi$ of
$G'\cup J$, but does not extend to a $3$-coloring of $G$.
It follows that the restriction of $\phi$ to $C$ extends to $J$,
but not to ins$(C)$, as desired.
This proves~\refclaim{Y}.  
\medskip

It follows from ($*$), \refclaim{Y} and Theorem~\ref{thm-planetr} that

\myclaim{B}{$T$ bounds a face and all other cycles in $G$ have length at least five.}

Consequently, $G$ is connected, as otherwise it would contain a component
disjoint from $R$ and this component would be $3$-colorable by Theorem~\ref{thm-planetr},
contrary to the assumption that $G$ is $R$-critical.
Next we constrain cycles in $G$ of length at most seven:

\myclaim{A}{ Let $C\ne R$ be a cycle in $G$ of length at most seven that does 
    not bound a face. Then $C$ has length at least six, and the closed
    disk bounded by $C$ includes $T$.}

To prove \refclaim{A} let $C$ be as stated.
By the minimality of $G$ and~\refclaim{Y} we deduce that $C$ has length
at least six.
If $T$ is not contained in the closed disk $\Delta$ bounded by $C$, then 
\refclaim{B} implies that ins$(C)$ has girth at least five,
contrary to~\refclaim{Y} and Theorem~\ref{thm-planechar}.
Thus $T$ is contained in $\Delta$, and~\refclaim{A} follows.
\endclaim

The same argument and the minimality of $G$ imply the following  claim.

\myclaim{A6}{ Let $C\ne R$ be a cycle in $G$ of length six that does
    not bound a face. Then {\rm ins}$(C)$ is isomorphic to one of the
    graphs depicted in Figure~\ref{fig-6plus3}.}

\myclaim{AA}{Let $C\ne R$ be a closed walk in $G$ of length $k\le11$
  bounding an open disk $\Delta$ disjoint from $T$, and let
  $H=\hbox{\rm ins}(C)$. 
  If $H\ne C$, then $H$ satisfies the conclusion of Theorem~\ref{thm-planechar}.}

To prove~\refclaim{AA} we modify $H$, converting $C$ to a cycle,
and then apply Theorem~\ref{thm-planechar}.
First we replace every edge of $H$ that is traversed by $C$ twice
by a pair of parallel edges bounding a face of length two, thus
creating a multigraph $H_1$ and a closed walk $C_1$ in $H_1$
such that $C_1$ uses no edge more than once.
Thus every vertex of $H_1$ is incident with an even number of edges of $C_1$.
Let $v$ be a vertex of $H_1$ incident with $2k\ge4$ edges of $C_1$.
Then the edges and faces of $C_1$ incident with $v$ can be numbered
$e_1,f_1,e_2,f_2,e_3,\ldots,f_{2k-1},e_{2k},f_{2k}$ in the clockwise cyclic
order of appearance around $v$ in such a way
that for $i=1,2\ldots,k$ the edges $e_{2i-1},e_{2i}$ are consecutive
in $C_1$ and $f_{2i-1}=\Delta$.
We split $v$ into $k$ vertices $v_1,v_2,\ldots,v_k$ as follows.
Let $i=1,2,\ldots,k$, and let $e_{2i-1},g_1,g_2,\ldots,g_l,e_{2i},\ldots$
be the edges of $H$ incident with $v$ listed in clockwise cyclic order
around $v$. Then the edges $e_{2i-1},g_1,g_2,\ldots,g_l,e_{2i}$ will
be incident with $v_i$ in the new graph in the clockwise order listed.
By repeating this construction for every vertex of $H_1$ incident with
at least four edges of $C_1$ we arrive at a plane graph $H_2$ such
that the walk $C_2$ corresponding to $C_1$ is a cycle.
By~\refclaim{Y} the graph $H$ is $C$-critical, and since the above construction
preserves criticality we deduce that $H_2$ is $C_2$-critical.
Thus~\refclaim{AA} follows from Theorem~\ref{thm-planechar}.
\endclaim




From \refclaim{B} and Theorem~\ref{thm-planetr} it follows that

\myclaim{D}{ $R$ has length six,}

\noindent
and since every cycle of length at most five bounds a face by~\refclaim{A}
we deduce that

\myclaim{DD}{ the graph $G$ has no subgraph $H$ with outer face $R$ 
    such that $H$ is isomorphic to either of the two
    graphs depicted in Figure~\ref{fig-6plus3};
    in particular, $R$ is induced.}

Next we claim that

\myclaim{E}{ every internal vertex has at most one neighbor in $V(R)$.}

To prove~\refclaim{E} suppose for a contradiction that
an internal vertex $v_2$ has two neighbors $v_1,v_3\in V(R)$.
Let $P$ denote the path $v_1v_2v_3$, and let $R,C_1,C_2$ be the three
cycles of $R\cup P$.
By~\refclaim{B} either one of $C_1,C_2$ is $T$ and
the other has length seven, or $C_1,C_2$ both have length five.
In either case it follows from~\refclaim{A} that $C_1,C_2$ both bound
faces of $G$, and hence $v_2$ has degree two, contrary to~\refclaim{Z}.
This proves~\refclaim{E}.
\endclaim

\myclaim{F}{ The cycle $T$ is disjoint from $R$.}

To prove~\refclaim{F} suppose for a contradiction that $v\in V(T)\cap V(R)$.
By~\refclaim{DD} and~\refclaim{E} $v$ is the only vertex of $T\cap R$.
The graph $T\cup R$ has a face bounded by a walk $C$ of length nine.
By~\refclaim{AA} combined with \refclaim{B}, at least one of the vertices of
$V(T)\setminus V(R)$ has degree two, which contradicts~\refclaim{Z}.
This proves~\refclaim{F}.
\endclaim

Let us fix an orientation of the plane, and
let $T=t_1t_2t_3$ and $R=r_1r_2\ldots r_6$ be numbered in clockwise cyclic 
order according to the drawing of $G$.

\myclaim{cl-onetr}{ $G$ has at most one edge joining $T$ to $R$.}

To prove~\refclaim{cl-onetr}
suppose that say $t_1r_1, t_2r_i\in E(G)$ for some $i\in\{1,\ldots, 6\}$.
By~\refclaim{B} we have $3\le i\le 5$.
Let $C_2=r_1t_1t_3t_2r_ir_{i+1}\ldots r_6$.  As $t_3$ has degree at least three,
$C_2$ does not bound a face; thus $C_2$ has length at least eight 
by~\refclaim{A}, and we conclude that $i=3$.
Thus $C_2$ has length exactly eight, and hence by~\refclaim{AA}
ins$(C_2)$
consists of $C_2$ and at most one chord.  
Since $t_3$ has degree
at least three, this chord exists and joins $t_3$ with $r_5$, and 
hence $G$ has a subgraph isomorphic to the graph depicted in
Figure~\ref{fig-6plus3}(b), contrary to~\refclaim{DD}.
This proves~\refclaim{cl-onetr}.
\endclaim

\myclaim{cl-no53}{ $G$ does not contain a $5$-face incident only with 
     internal vertices of degree three.}

To prove~\refclaim{cl-no53} suppose
for a contradiction that $G$ contains such a $5$-face $C=v_1v_2v_3v_4v_5$.
For $1\le i\le 5$, let $x_i$ be the neighbor of $v_i$ not belonging to $C$ 
(each $v_i$ has such a neighbor,
because $T$ \rt{bounds a face by~\refclaim{B} and $C$ bounds a face by definition}
and each vertex of $C$ has degree three).  
Since $T$ is disjoint from $R$ by~\refclaim{F} and
$G$ contains no $4$-cycles by~\refclaim{B}, 
it follows that at most three of the vertices 
$x_1, \ldots, x_5$ belong to $R$.  
Without loss of generality we may assume that $x_1$ is internal.
Note also that $x_1\not\in\{ x_3,x_4\}$, as $G$ does not contain a $4$-cycle.
By the symmetry between $x_3$ and $x_4$ we may assume that if
$x_3$ is adjacent to a vertex of $R$, then so is $x_4$.
Let $G'$ be the graph obtained from $G-V(C)$ by adding the edge $x_1x_3$
drawn in the same way as the path $x_1v_1v_2v_3x_3$ in $G$.
Observe that every $3$-coloring of $G'$ extends to a $3$-coloring of $G$: 
given a $3$-coloring of $G'$, every vertex in $C$ has a list of two available
colors, and the lists of $v_1$ and $v_3$ are different.  


Our next objective is to show that $G'$ satisfies ($*$).
To that end let $K'\ne T$ be a cycle in $G'$ of length at most four.
Then $K'$ includes the edge $x_1x_3$ by~\refclaim{B}.
Consider the cycle $K$ in $G$ obtained from $K'$ by replacing the edge
$x_1x_3$ by the path $P=x_1v_1v_2v_3x_3$.
Note that $K$ has length at most seven, and that it does not bound a face 
\rt{of $G$}
(since the edges $v_1v_5$ and $v_2x_2$ are drawn on the opposite sides of $P$).
Thus by~\refclaim{A} $T$ is a subgraph of ins$(K)$, and since the edge $x_1x_3$ of $G'$
is drawn in the same way as the path $P$ of $G$, it follows that the point $p$
is contained in the open disk bounded by $K'$.
We conclude that $G'$ satisfies ($*$), as desired.

\begin{figure}
\begin{center}
\epsfbox{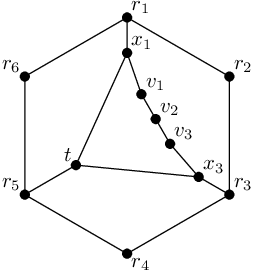}
\end{center}
\caption{A configuration obtained in the proof of \refclaim{cl-no53}.}
\label{fig-12}
\end{figure}

Let $G''$ be a minimal subgraph of $G'$ such that $R$ is a subgraph
of $G''$ and 
every $3$-coloring of $R$ that extends to a $3$-coloring of $G''$ extends
to a $3$-coloring of $G'$.  
Then $G''\ne R$, for otherwise every $3$-coloring of $R$ extends to
a $3$-coloring of $G''$, and hence to a $3$-coloring of $G'$, 
and therefore to one of $G$, contrary to the $R$-criticality of $G$.
We conclude that $G''$ is $R$-critical.
The minimality of $G$ implies that $G''$ is isomorphic to one of the
graphs depicted in Figure~\ref{fig-6plus3}.
But $R$ is an induced subgraph of $G$ by~\refclaim{DD} 
and $x_1$ is internal, and hence $G''$
is isomorphic to the graph of Figure~\ref{fig-6plus3}(b).
Let $L'$ be the triangle of $G''$. 
By \refclaim{cl-onetr} we have $L'\neq T$, 
and hence $x_1x_3$ is an edge of $L'$.
Let $t$ be the third vertex of $L'$.
We may assume that $x_1$ is adjacent to $r_1$, $x_3$ is adjacent to $r_3$
and $t$ is adjacent to $r_5$, where the adjacencies take place in
$G,G'$ and $G''$.
Let $D'$ be the face boundary of the $5$-face of $G''$ incident with the edge 
$x_1x_3$, and let $D$ be the $8$-cycle of $G$ obtained
from $D'$ by replacing the edge $x_1x_3$ by the path $P$ (see Figure~\ref{fig-12}).  
Let $L$ be the $6$-cycle in $G$ obtained from $L'$ by
replacing the edge $x_1x_3$ by the path $P$.  
By~\refclaim{A} $T$ lies in the closed disk bounded by $L$, and 
since $t$ is adjacent to $r_5$ it follows that
ins$(D)$ includes no cycle of length at most four.  
By~\refclaim{AA}
no vertex of $G$ lies in the open disk bounded by $D$, 
and hence $v_4$ and $v_5$ lie in the open disk bounded by $L$.
Since $G$ has no $4$-cycles we deduce that $x_4\not\in\{x_1,t\}$,
and $x_3\ne x_4$, for otherwise $T=x_3v_4v_3$ and the cycle
$x_3tx_1v_1v_5v_4$ includes the edge $v_5x_5$ in its inside but not $T$,
contrary to~\refclaim{A}.
Since $x_3$ is adjacent to $r_3$, the choice of $x_3$ implies that
$x_4$ is adjacent to a vertex of $R$, contrary to the planarity of $G$.
This proves~\refclaim{cl-no53}.
\endclaim




\myclaim{cl-disttr}{ The distance between $R$ and $T$ is at least two.}

To prove~\refclaim{cl-disttr} 
suppose for a contradiction that the distance between $R$ and $T$ is at most one.
Then it is exactly one by~\refclaim{F}, and so we may assume that 
say $t_1r_1\in E(G)$.
\rt{Let $C$ denote the closed walk $r_1r_2\ldots r_6r_1t_1t_2t_3t_1$.
Then $C$ bounds an open disk. Let $H=\hbox{ins}(C)$.
Since $G$ is $R$-critical, we see that $H\ne C$.
By~\refclaim{AA} the graph $H$ satisfies (a), (b) or (c) of
Theorem~\ref{thm-planechar}.}
If it satisfies (a), then by~\refclaim{Z}, $t_2$ has a neighbor in $R$,
contrary to~\refclaim{cl-onetr}.
If $H$ satisfies (b), then
$H-V(C)$ is a tree $X$ with at most three vertices,
each of degree three. Both $t_2$ and $t_3$
have a neighbor in $X$, and hence $X\cup\{t_2,t_3\}$ includes the
vertex-set of a $5$-cycle, contrary to~\refclaim{cl-no53}.
Finally, $H$ cannot satisfy (c), because the cycle referenced
in~(c) would contradict~\refclaim{cl-no53}.
This proves~\refclaim{cl-disttr}.

\myclaim{cl-no22r}{ No two vertices of degree two are adjacent in $G$}.

To prove~\refclaim{cl-no22r}
suppose for a contradiction that $G$ has two adjacent vertices of degree two.
By~\refclaim{Z} they belong to $R$, and so we may assume that 
say $r_2$ and $r_3$ have degree two.  
The edge $r_2r_3$ is not contained in any $5$-cycle,
as otherwise $R$ would have a chord or an internal vertex would have two 
neighbors in $R$, contrary to~\refclaim{DD}
and~\refclaim{E}.
Let $G'$ be the graph obtained from $G$ by contracting the edge $r_2r_3$, 
and let $R'$ be the corresponding outer cycle of $G'$.
Then $G'$ has no cycle of length at most $4$ distinct from $T$.
Furthermore, every $3$-coloring $\psi$ of $R$ can be modified to a 
$3$-coloring $\psi'$ of $R'$ such that $\psi'(r_i)=\psi(r_i)$ for $i\in \{1,4,5,6\}$,
and $\psi$ extends to $G$ if and only if $\psi'$ extends to a $3$-coloring of $G'$. 
It follows that $G'$ is $R'$-critical, contrary to Theorem~\ref{thm-planetr}.  
This proves~\refclaim{cl-no22r}.
\endclaim

\myclaim{cl-no3chord}{ For every path $v_1v_2v_3v_4$ with $v_2$ 
  and $v_3$ internal and $v_1,v_4\in V(R)$
  there exists $r\in V(R)$ such that $v_1v_2v_3v_4r$ bounds a $5$-face.}

To prove~\refclaim{cl-no3chord}
consider a path $P=v_1v_2v_3v_4$ with $v_2$ and $v_3$ internal 
and $v_1,v_4\in V(R)$,
and let $C_1$ and $C_2$ be the cycles of  $R\cup P$ other than $R$ such that $T$
lies in the closed disk bounded by $C_1$.  
Since $C_1$ does not bound a face, it has length at least six by~\refclaim{A}.
Thus $C_2$ has length at most six, and hence bounds a face by~\refclaim{A},
and therefore has length five by~\refclaim{cl-no22r}.
This proves~\refclaim{cl-no3chord}.
\endclaim

\myclaim{cl-faces}{ All faces of $G$ distinct from $R$ and $T$ have 
        length exactly five.} 

To prove~\refclaim{cl-faces}
consider a face $f=v_1v_2\ldots v_k$ of length $k\ge 6$ in $G$.  
By~\refclaim{E} we may assume without loss of generality that
$v_2$ and $v_3$ are internal.  Furthermore, if $k=6$, then not
all of $v_1$, $v_4$, $v_5$ and $v_6$ may belong to $R$, by \refclaim{cl-no22r},
and hence, by symmetry, we may assume that either $v_4$ or $v_6$ is internal.
Let $W=\{v_2,v_k\}$ if $k>6$ and $W=\{v_2,v_4,v_6\}$ if $k=6$.
Let $G'$ be the graph obtained from $G$ by identifying the vertices of $W$ 
to a new vertex $w$
and deleting all resulting parallel edges (the drawing of the new vertex $w$ is placed inside the
face $f$ of $G$ and the drawings of the edges incident with the vertices of $W$ in $G$ are first
shifted infinitesimally so that they do not intersect and then joined to $w$ through $f$). 
Thus $E(G')\subseteq E(G)$.
By~\refclaim{Z} and~\refclaim{A} the vertices of $W$ are pairwise non-adjacent;
thus the identifications created no loops.
Observe that every $3$-coloring $\psi$ of $G'$ gives rise to a $3$-coloring
of $G$ (color the vertices of $W$ using $\psi(w)$).
It follows that some $3$-coloring of $R$ does not extend to a $3$-coloring of $G'$.
Let $G''$ be a minimal subgraph of $G'$ such that $R$ is a subgraph of $G''$
and every $3$-coloring of $R$ that extends to a $3$-coloring of $G''$
also extends to a $3$-coloring of $G'$;
then $G''$ is $R$-critical.

Next we show that $G''$ satisfies ($*$).
As a first step we prove that $G''$ does not have a triangle other than $T$.
To that end let $K'\ne T$ be a triangle in $G''$.
Recall that $E(G'')\subseteq E(G)$.
Two of the edges of $K'$ are incident in $G$ with distinct vertices 
$w_1,w_2\in W$.
Let $K$ be the corresponding $5$-cycle in $G$, obtained from $K'$ by replacing $w$ with
the two-edge path between $w_1$ and $w_2$ with internal vertex in
$\{v_1,v_3,v_5\}$.
Observe that $K$ does not bound a face in $G$, contrary to~\refclaim{A}.
Therefore, $G''$ does not have a triangle distinct from $T$.  
Consider now a $4$-cycle $L'$ in $G''$.
The corresponding cycle $L$ in $G$ (constructed in the same way as $K$)
has length six.  
As $L$ does not bound a face we can apply~\refclaim{A6} to the cycle $L$;
it follows that $T$ is contained in the closed disk bounded by $L$.  Note that the point
$p$ is contained in the open disk bounded by $T$ and since $p$ does not lie inside $f$,
the choice of the drawing of $G'$ ensures that $p$ is contained in the open disk bounded by $L'$.
It follows that $G''$ satisfies ($*$).

Since $G''$ has fewer vertices than $G$,
$G''$ is one of the graphs drawn in Figure~\ref{fig-6plus3}.
Furthermore, the first result of the previous paragraph implies
that $T$ is the unique triangle of $G''$.
However, this implies that the distance between $T$ and $R$ in
$G$ is at most one, contradicting \refclaim{cl-disttr}.
This proves~\refclaim{cl-faces}.
\endclaim

\myclaim{cl-onet4}{At least one vertex of $T$ has degree at least four.}

To prove~\refclaim{cl-onet4} suppose for a contradiction
that all vertices of $T=t_1t_2t_3$ have degree at most three.
By~\refclaim{Z} and \refclaim{F} they have degree exactly three.
For $i=1,\ldots, 3$, let $x_i$ be the neighbor
of $t_i$ that does not belong to $V(T)$.  As $T$ is the only cycle of length
at most four in $G$ \rt{by~\refclaim{B}}, 
these vertices are distinct and pairwise non-adjacent,
and by \refclaim{cl-disttr}, they are internal.

Suppose first that each of $x_1$, $x_2$ and $x_3$ has a neighbor in $R$.
Let $H$ be the subgraph of $G$ 
\rt{consisting of $T\cup R$ and the six edges joining $x_1,x_2,x_3$ to
$T$ and $R$,}
and let $C_1$, $C_2$ and $C_3$ be the cycles bounding the faces of $H$ distinct from $R$ and $T$.
Note that $|V(C_1)|+|V(C_2)|+|V(C_3)|=21$ and 
$G=\hbox{ins}(C_1)\cup\hbox{ins}(C_2)\cup\hbox{ins}(C_3)$.
At most one of $C_1$, $C_2$ and $C_3$ bounds a face of $G$, as otherwise $x_1$, $x_2$ or $x_3$ would be an internal vertex
of degree two, contrary to~\refclaim{Z}.
\rt{From the symmetry we may assume that $C_1$ and $C_2$ do not bound a face.}
From~\refclaim{A} \rt{it follows} that $|V(C_1)|,|V(C_2)|\ge 8$.
This implies that $|V(C_3)|=5$ and $|V(C_1)|=|V(C_2)|=8$.  
However, \refclaim{AA} implies that $C_1$ has a chord, 
contradicting~\refclaim{DD}, \refclaim{E} \rt{or~\refclaim{cl-disttr}}.

Therefore, by symmetry we assume that $x_1$ has no neighbor in $R$.
Let $G'$ be the graph obtained from $G-V(T)$ by adding the edge $x_1x_2$, where the edge is drawn along the
path $x_1t_1t_3t_2x_2$ of $G$.
\rt{Since $G$ is $R$-critical, there exists a $3$-coloring $\phi$ of $R$
that does not extend to a $3$-coloring of $G$.}
Note that every coloring of $G'$ extends to a coloring of $G$, since each vertex of $T$ has a list of two available colors
and the lists of $t_1$ and $t_2$ are different.  
Thus $\phi$ does not extend to  a $3$-coloring of $G'$.
Let $G''$ be a smallest subgraph of $G'$ such that $R$ is a subgraph of
$G''$ and $\phi$ does not extend to $G''$.
Then $G''$ is $R$-critical.

Consider a cycle $K'$ in $G''$ of length at most four.  Note that
$x_1x_2$ is an edge of $K'$, and let $K$ be the cycle obtained from $K'$ by
replacing this edge by the path $x_1t_1t_3t_2x_2$.  The cycle $K$ has length six or seven,
and thus by~\refclaim{AA} and \refclaim{cl-faces}, $T$ is contained in the closed disk bounded by $K$.
By the choice of the drawing of the edge $x_1x_3$, we conclude that the point $p$ is contained in the
open disk bounded by $K'$.  Therefore, $G''$ satisfies ($*$).

As $G''$ has fewer vertices than $G$, $G''$ is isomorphic to one of the graphs in Figure~\ref{fig-6plus3}.
As $T$ is not a subgraph of $G''$, the triangle of $G''$ contains the edge $x_1x_2$, and thus $x_1$
is at distance at most one from $R$ both in $G''$ and in $G$.  This contradicts the choice of $x_1$,
finishing the proof of~\refclaim{cl-onet4}.
\endclaim


\myclaim{cl-excl6}{If $C$ is a $6$-cycle of $G$ distinct from $R$, then the open disk bounded by $C$
contains no vertices and $V(T)\subseteq V(C)$.}

\rt{To prove~\refclaim{cl-excl6} let $C$ be as stated.
By~\refclaim{cl-faces} it does not bound a face, and by~\refclaim{A6} ins$(C)$
is isomorphic to one of the graphs depicted in Figure~\ref{fig-6plus3}.
By~\refclaim{cl-onet4} it is not isomorphic to the second of those two
graphs, and hence~\refclaim{cl-excl6} follows.}
\endclaim

\myclaim{cl-5faces}{ If $f$ is a $5$-face incident with four internal vertices of 
  degree three, then \rt{some edge of $T$ is incident with $f$}.}

To prove~\refclaim{cl-5faces}
suppose for a contradiction that $G$ contains a $5$-face $f=v_1v_2v_3v_4v_5$, 
where $v_1$, $v_3$, $v_4$ and $v_5$ are internal vertices of degree three,
and that \rt{no edge of $T$ is incident with $f$.} 
By \refclaim{cl-no53} the degree of $v_2$ is at least four:
this follows directly from~\refclaim{cl-no53} if $v_2$ is internal;
otherwise $v_2$ has two neighbors in $R$ and two internal neighbors 
\rt{incident with} $f$.
Let $x_1$, $x_3$, $x_4$ and $x_5$ be the neighbors of $v_1$, $v_3$, $v_4$ 
and $v_5$, respectively, outside of $\{v_1,v_2,\ldots,v_5\}$.
If $v_2\in V(R)$, then $x_3$ is internal since $v_3$ has only one neighbor
in $R$ by~\refclaim{E}, and
$x_4$ and $x_5$ are internal by \refclaim{cl-no3chord} and \refclaim{Z}.
Also, not all of $x_1$, $x_3$, $x_4$ and $x_5$ belong to $R$,
as \rt{no edge of $T$ is incident} with $f$.  
Thus we may assume that at least one of $x_3$ and $x_4$ and
at least one of $v_2$ and $x_5$ is internal.  
As $f$ does not share an edge with $T$, the vertices $v_2$, $x_3$, $x_4$ and $x_5$ 
are distinct and pairwise non-adjacent by~\refclaim{Z}, \refclaim{B} and~\refclaim{A}.  
Let $G'$ be the graph obtained from $G-\{v_1,v_3,v_4,v_5\}$ by identifying
$v_2$ with $x_5$ to a new vertex $w_1$ and $x_3$ with $x_4$ to a new vertex $w_2$
(with both $w_1$ and $w_2$ drawn inside the original face $f$ and the edges
incident with $v_2$, $x_5$, $x_3$ and $x_4$ extended towards them in the natural way,
without changing their position with respect to the point $p$).
Note that any coloring $\psi$ of $G'$ extends to a coloring of $G$: 
Give $v_2$ and $x_5$ the color $c_1=\psi(w_1)$
and $x_3$ and $x_4$ the color $c_2=\psi(w_2)$.  
If $c_1=c_2$, then color the vertices of $V(F)\setminus \{v_2\}$
in the order $v_1$, $v_5$, $v_4$ and $v_3$.  
Otherwise, color $v_4$ by $c_1$ and then color $v_3$, $v_1$ and $v_5$ in order.
It follows that some $3$-coloring of $R$ does not extend to a 
$3$-coloring of $G'$.
Let $G''$ be a minimal subgraph of $G'$ such that $R$ is a subgraph of $G''$
and every $3$-coloring of $R$ that extends to a $3$-coloring of $G''$
also extends to a $3$-coloring of $G'$.
Then $G''$ is $R$-critical.

Next we show that $G''$ satisfies ($*$).
Consider  a cycle $K'$ of $G''$ of length at most four distinct from $T$,
and let $K\subseteq G$ be the corresponding cycle obtained by replacing 
$w_1$ by \rt{$v_2$ or $x_5$ or} $v_2v_1v_5x_5$, \rt{as appropriate, and
replacing} $w_2$ by \rt{$x_4$ or $x_3$ or} $x_4v_4v_3x_3$, \rt{as appropriate}.
If we \rt{added both $v_2v_1v_5x_5$ and $x_4v_4v_3x_3$}, 
then $K$ has length at most $10$
and it has two chords $v_2v_3$ and $v_4v_5$.
Thus one of them must belong to $T$, contradicting
the assumption that \rt{no edge of $T$ is incident with $f$.} 
Therefore, we expanded only one vertex in $K'$ \rt{into a path}, 
and hence $6\le |V(K)|\le 7$.  
By \refclaim{cl-faces}, $K$ does not bound a face.
By~\refclaim{A} $T$ is a subgraph of ins$(K)$.  The choice of the drawing of $G'$
ensures that the point $p$ (contained in the open disk bounded by $T$) belongs to
the open disk bounded by $K'$; hence, $G''$ satisfies ($*$), as claimed.

Since $G''$ has fewer vertices than $G$,
we conclude that $G''$ is isomorphic to one of the graphs from 
Figure~\ref{fig-6plus3}.  
Let $K'$ be the unique triangle of $G''$.
Using \refclaim{cl-disttr}, we conclude that $K'\neq T$.  
\rt{From~\refclaim{B}, \refclaim{A} and the fact that no edge of $T$ is 
incident with $f$ we deduce that $w_1$ and $w_2$ are not adjacent in $G'$.
It follows that exactly one of $w_1,w_2$ belongs to $K'$.}
Let $K$ be the corresponding cycle of length six in $G$.
Since $K\neq R$, \refclaim{cl-excl6} implies that $V(T)\subseteq V(K)$.
Let us label the vertices of $K$ so that $K=xvv'x'y_1y_2$,
where $xvv'x'$ is either $v_2v_1v_5x_5$ or $x_3v_3v_4x_4$.
Since \rt{no edge of $T$ is incident with $f$} and $v_1$, $v_3$, $v_4$ and $v_5$
have degree three, it follows that $y_1y_2$ is an edge of $T$.
Let $j\in\{1,2\}$ be the index such that $x$ and $x'$ are identified
into $w_j$.  
\rt{Then $w_{3-j}\not\in \{y_1,y_2\}$, because $w_1$ and $w_2$ are not 
adjacent in $G'$. Thus one of $y_1,y_2$ is not adjacent to $w_{3-j}$
(because $K'$ is the only triangle in $G''$),
and so we may assume that $y_1$ is not adjacent to $w_{3-j}$.}
Since $G''$ is isomorphic to one of the graphs from Figure~\ref{fig-6plus3},
$y_1$ is at distance at most one from $R$ in $G''$, and,
\rt{since $y_1$ is not adjacent to $w_{3-j}$}, we conclude that
$y_1$ is also at distance at most one from $R$ in $G$.
This contradicts~\refclaim{cl-disttr} and proves~\refclaim{cl-5faces}.
\endclaim

\myclaim{cl-no232}{ The cycle $R$ has no subpath $z_1z_2z_3$ 
       with $\deg(z_2)=3$ and $\deg(z_1)=\deg(z_3)=2$.}

To prove~\refclaim{cl-no232} suppose for a contradiction that 
say $r_2$ and $r_4$ have degree two and $r_3$
has degree three.  
By \refclaim{cl-no22r} the vertices $r_1$ and $r_5$ have degree at least three.
By~\refclaim{cl-faces} the face incident with $r_2$ distinct from the 
outer face is bounded by a $5$-cycle, say $r_1r_2r_3yx$.  
Similarly, there is a face bounded by a $5$-cycle $r_3r_4r_5zy$,
where $x\ne z$ by~\refclaim{Z}.
Let $K$ be the $6$-cycle $r_1xyzr_5r_6$.
By~\refclaim{Z} and~\refclaim{A6} the graph ins$(K)$
is isomorphic to one of the graphs in Figure~\ref{fig-6plus3},
contrary to~\refclaim{cl-disttr}.
This proves~\refclaim{cl-no232}.
\endclaim

\myclaim{cl-no233233}{ If $R$ has at least two vertices of degree two, 
   then it has at least one vertex of degree at least four.}

To prove~\refclaim{cl-no233233}
suppose for a contradiction that $R$ has at least two 
vertices of degree  two and the remaining vertices of degree at most three.  
By \refclaim{cl-no22r} and \refclaim{cl-no232} $G$ has exactly two vertices
of degree two, and the distance in $R$ between them is three. 
We may therefore assume that
$r_1$ and $r_4$ have degree two, and $r_2,r_3,r_5,r_6$
have degree three.  
By \refclaim{cl-faces}, $G$ has 
a $6$-cycle $C=x_1x_2x_3x_4x_5x_6$ such that 
$x_1r_2, x_3r_3, x_4r_5, x_6r_6\in E(G)$.
By~\refclaim{A6} the graph ins$(C)$ is isomorphic to
one of the graphs in Figure~\ref{fig-6plus3}.
It follows that either $x_2$ or $x_5$ has degree two, contrary to~\refclaim{Z}.
This proves~\refclaim{cl-no233233}.
\endclaim

We are now ready to complete the proof of Theorem~\ref{thm-6plus3new} using
the so-called discharging argument.
Let us assign charges to the vertices and faces of $G$ in the following way:
Each face $f$ of length $|f|$ not bounded by $R$ or $T$ gets 
a charge of $1=|f|-4$, 
the face bounded by $T$ gets charge $2=(|V(T)|-4)+3$,
and the face bounded by $R$ gets charge $0=(|V(R)|-4)-2$.
A vertex $v\in V(R)$ of degree two gets charge $-1/3=(\deg(v)-4)+5/3$,
a vertex $v\in V(R)$ of degree three gets charge $0=(\deg(v)-4)+1$,
and all other vertices $v$ get charge $\deg(v)-4$.

\myclaim{bubbles}{ The total sum of the charges is at most $-1/3$.}

To prove~\refclaim{bubbles} we deduce from Euler's formula 
the sum of the charges is at most
$\sum_{f\in F(G)}(|f|-4)+\sum_{v\in V(G)} (\deg(v)-4)+n_3+5n_2/3+1=
n_3+5n_2/3-7$, where $n_2$ is the number of vertices of degree two
and $n_3$ is the number of  vertices of $R$ of degree three in $G$. 
By \refclaim{cl-no22r} $n_2\le 3$.  
By \refclaim{cl-no232}, if $n_2=3$ then $n_3=0$.
By \refclaim{cl-no233233}, if $n_2=2$, then $n_3\le 3$.
It follows that $n_3+5n_2/3\le 20/3$, and hence the sum
of the charges is at most $-1/3$, as desired.
This proves~\refclaim{bubbles}.
\endclaim

\junk{
By \refclaim{cl-no22r}, $n_2\le 3$.  
If $n_2\le1$, then $n_3+5n_2/3-7\le-4/3$, and the claim holds.
If $n_2=2$, then either $n_3+5n_2/3-7\le-1/3$ and the claim again holds, 
or $n_3=4$, in which case $G$ has a bubble by~\refclaim{cl-no233233},
and hence $n_3+5n_2/3-7\le1/3\le(2b-1)/3$.
Finally, if $n_2=3$, then either $n_2\le1$, in which case
$n_3+5n_2/3-7\le-1$ and the claim holds; 
or $n_2=2$, in which case $G$ has a bubble by~\refclaim{cl-no22r}
and~\refclaim{cl-no232} and hence the claim holds;
or $n_2=3$, in which case $G$ has at least two bubbles by~\refclaim{cl-no22r}
and two applications of~\refclaim{cl-no232}, and hence the claim holds.
}

Let us now redistribute the charge according to the following rules: every face
distinct from $R$ sends $1/3$ to each incident vertex of degree two
and each incident internal vertex of degree three.  The face $T$ sends $1/3$ to each
face that shares an edge with it.  
The final charge of each vertex and of the faces $R$ and $T$
is clearly non-negative.
Since the sum of the final charges is equal to the sum of the initial charges,
it follows from~\refclaim{bubbles} that $G$ has a face $f$ of strictly negative final
charge. The face $f$ has length five; let $v_1,v_2,v_3,v_4,v_5$ be the
incident vertices in order.

If say $v_2$ were a vertex of degree two, then by \refclaim{cl-no22r}, 
$v_1$ and $v_3$ would be  vertices of $R$
of degree at least three, and hence $f$ would send no charge to them,
contrary to the fact that the final charge of $f$ is strictly negative.
It follows that all vertices of $f$ have degree at least three, and since the 
final charge of $f$ is negative, $f$ sends charge to at least four of them.  
Therefore, at least four of the vertices incident with $f$
are internal and have degree three.  
The fifth vertex has degree at least four
by \refclaim{cl-no53}: this is clear if it is internal, and otherwise
it has two neighbors on $R$ and two neighbors on $f$.
By \refclaim{cl-5faces} $f$ shares an edge with $T$.  
However, $f$ sends $1/3$ to each of its incident vertices of degree three
and nothing to the fifth vertex,
and receives $1/3$ from $T$; hence the final charge of $f$ is non-negative,
a contradiction.~\qed
\bigskip

We are now ready to prove Theorem~\ref{thm-6plus3}.
\bigskip

\noindent
{\bf Proof of Theorem~\ref{thm-6plus3}.}
Let $G,T$ and $\phi$ be as in Theorem~\ref{thm-6plus3}.
We may assume that $R$ bounds the outer face.  Let $p$ be any point of the
plane contained in the open disk bounded by $T$.
Let $G'$ be a minimal subgraph of $G$ such that $R$ is a subgraph of $G'$
and $\phi$ does not extend to a $3$-coloring of $G'$.
It follows that $G'$ is $R$-critical.
Note that $G'$ satisfies hypothesis ($*$) of  Theorem~\ref{thm-6plus3new}.
By Theorem~\ref{thm-6plus3new} the graph $G'$ is isomorphic to one of
the graphs depicted in Figure~\ref{fig-6plus3}.
If neither of the two outcomes of Theorem~\ref{thm-6plus3} holds,
then $\phi$ extends to a $3$-coloring of $G'$, a contradiction.~\qed

\def\JCTB{{\it J.~Combin.\ Theory Ser.\ B}}
\def\JGT{{\it J.~Graph Theory}}


\begin{thebibliography}{99}
\bibitem{aksenov} V.~A.~Aksionov,
On continuation of $3$-colouring of planar graphs (in Russian),
{\em Diskret. Anal.  Novosibirsk} {\bf 26} (1974), 3--19.

\bibitem{AlbHut} M.~Albertson and J.~Hutchinson,
The three excluded cases of Dirac's map-color theorem,
 Second International Conference on Combinatorial Mathematics
(New York, 1978),  pp.~7--17,
{\it Ann.\ New York Acad.\ Sci.} {\bf 319}, New York Acad.\ Sci.,
New York, 1979.

\bibitem{AppHak1} K. Appel and W. Haken,
Every planar map is four colorable, Part I: discharging,
{\it Illinois J. of Math.} {\bf 21} (1977), 429--490.

\bibitem{AppHakKoc} K. Appel, W. Haken and J. Koch,
Every planar map is four colorable, Part II: reducibility,
{\it Illinois J. of Math.} {\bf 21} (1977), 491--567.

\bibitem{AppHak86} K.~Appel and W.~Haken,
The four color proof suffices,
{\it The Mathematical Intelligencer \bf 8} (1986), 10--20.

\bibitem{AppHak89} K. Appel and W. Haken,
Every planar map is four colorable,
{\it Contemp. Math.} {\bf 98} (1989).

\bibitem{BonMur} J.~A.~Bondy and U.~S.~R.~Murty,
Graph Theory with Applications,
North-Holland, New York, Amsterdam, Oxford, 1976.

\bibitem{BorGleMonRas} O.~V.~Borodin, A.~N.~Glebov, M.~Montassier, A.~Raspaud,
Planar graphs without 5- and 7-cycles and without adjacent triangles are 3-colorable,
\JCTB\ {\bf99} (2009), 668--673.

\bibitem{ChePosStrThoYer} N.~Chenette, L.~Postle, N.~Streib, R.~Thomas and C.~Yerger,
Five-coloring graphs on the Klein bottle,
\JCTB\ {\bf 102} (2012), 1067--1098.

\bibitem{Dirmap} G. A. Dirac, Map color theorems,
{\it Canad.\ J.~Math.} {\bf 4} (1952), 480--490.

\bibitem{DvoKawTho} Z.~Dvo\v{r}\'ak, K.~Kawarabayashi and R.~Thomas,
Three-coloring triangle-free planar graphs in linear time,
{\it ACM Transactions on Algorithms} {\bf 7} (2011), article no. 41.

\bibitem{proof-druhy} Z.~Dvo\v{r}\'ak, D.~Kr\'al' and R.~Thomas,
Three-coloring triangle-free graphs on surfaces II. 
$4$-critical graphs in a disk, {\tt arXiv:1302.2158}.

\bibitem{proof-lincrit} Z.~Dvo\v{r}\'ak, D.~Kr\'al' and R.~Thomas,
Three-coloring triangle-free graphs on surfaces III. Graphs of girth five, 
{\tt arXiv:1402.4710}.

\bibitem{proof-havel} Z.~Dvo\v{r}\'ak, D.~Kr\'al' and R.~Thomas,
Three-coloring triangle-free graphs on surfaces V. Coloring planar graphs with distant anomalies,
{\tt arXiv:0911.0885}.

\bibitem{Fis} S.~Fisk, The nonexistence of colorings,
\JCTB\ {\bf 24} (1978), 247--248.

\bibitem{GarJoh} M.~R.~Garey and D.~S.~Johnson,
Computers and intractability. A guide
to the theory of NP-completeness, W. H. Freeman, San Francisco, 1979.

\bibitem{gimbel} J.~Gimbel and C.~Thomassen,
Coloring graphs with fixed genus and girth,
{\em Trans. Amer. Math. Soc.} {\bf 349} (1997), 4555--4564.

\bibitem{grotzsch} H.~Gr\"otzsch,
Ein Dreifarbensatz f\"ur dreikreisfreie Netze auf der Kugel,
{\em Wiss. Z. Martin-Luther-Univ. Halle-Wittenberg Math.-Natur. Reihe}
{\bf 8} (1959), 109--120.

\bibitem{grunbaum} B.~Gr\"unbaum,
Gr\"tzsch's theorem on $3$-colorings,
{\it Michigan Math. J.} {\bf 10} (1963), 303--310.

\bibitem{conj-havel} I.~Havel,
On a conjecture of Gr\"unbaum,
\JCTB\ {\bf 7} (1969), 184--186.

\bibitem{Heawood} P.~J.~Heawood, Map-color theorem,
{\it Quart.\ J.~Pure Appl.\ Math.} {\bf 24} (1890), 332--338.

\bibitem{KawKraKynLid} K.~Kawarabayashi, D.~Kral, J.~Kyn\v{c}l, and B.~Lidick\'y,
6-critical graphs on the Klein bottle,
{\it SIAM J.~Discrete Math.}  {\bf23}  (2008/09), 372--383.

\bibitem{Ringel} G.~Ringel, Map Color Theorem,
Springer-Verlag, Berlin, 1974.

\bibitem{RobSanSeyTho4CT}
N.~Robertson, D.~P.~Sanders, P.~D.~Seymour and R.~Thomas,
The four-colour theorem, \JCTB\ {\bf 70} (1997), 2-44.

\bibitem{thomwalls} R.~Thomas and B.~Walls,
Three-coloring Klein bottle graphs of girth five,
\JCTB\ {\bf  92}  (2004), 115--135.

\bibitem{Tho5torus} C. Thomassen, Five-coloring graphs on the torus,
\JCTB\ {\bf62} (1994), 11--33.

\bibitem{thom-torus} C.~Thomassen,
Gr\"otzsch's 3-Color Theorem and Its Counterparts for the Torus and the Projective Plane,
\JCTB\ {\bf 62} (1994), 268--279.

\bibitem{Tho3list} C.~Thomassen,
$3$-list coloring planar graphs of girth 5,
\JCTB\ {\bf64} (1995), 101--107.

\bibitem{ThoCritical} C. Thomassen, Color-critical graphs on a fixed surface,
\JCTB\ {\bf70} (1997), 67--100.

\bibitem{ThoShortlist} C. Thomassen,
A short list color proof of Grotzsch's theorem,
\JCTB\ {\bf88} (2003), 189--192.

\bibitem{thom-surf} C.~Thomassen,
The chromatic number of a graph of girth 5 on a fixed surface,
{\em J.~Combin. Theory Ser. B} {\bf 87} (2003), 38--71.

\bibitem{conj-stein} R.~ Steinberg,
The state of the three color problem. Quo Vadis, Graph Theory?,
{\em Ann. Discrete Math.} {\bf 55} (1993), 211--248.

\bibitem{WalPhD} B.~Walls, Coloring girth restricted graphs on surfaces,
Ph.D.\ dissertation, Georgia Institute of Technology, 1999.

\bibitem{xu} B.~Xu,
On $3$-colorable plane graphs without $5$- and $7$-cycles, {\em J.~Combin. Theory Ser. B} {\bf 96} (2006), 958--963.

\bibitem{Youngs} D.~A.~Youngs, $4$-chromatic projective graphs,
\JGT\ {\bf21} (1996), 219--227.

\end{thebibliography}
\end{document}